\newcommand{\R}{\mathbb{R}}
\newcommand{\E}{\mathbf{E}}
\newcommand{\T}{\mathcal{T}}
\newcommand{\p}{\partial}
\newcommand{\pt}{\tilde{\p}}
\newcommand{\nn}{\nonumber}
\newcommand{\existss}{\exists\hspace{1mm}}
\newcommand{\foralll}{\forall\hspace{1mm}}
\newtheorem{teo}{Theorem}
\newtheorem{prop}[teo]{Proposition}
\newtheorem{lema}[teo]{Lemma}
\theoremstyle{definition}	 
\newtheorem{cor}[teo]{Corollary}
\theoremstyle{definition} 
\newtheorem{defin}[teo]{Definition}
\theoremstyle{definition} 
\newtheorem{nota}[teo]{Remark}
\theoremstyle{definition} 
\newtheorem{ex}[teo]{Example}
\begin{document}



\title[A new class of group entropies and information geometry]{A new class of  entropic information measures, formal group theory and \\ information geometry}


\author[M. A. Rodr\'iguez]{Miguel \'A. Rodr\'iguez}
\address{Departamento de F\'{\i}sica Te\'{o}rica, Facultad de F\'{\i}sicas, Universidad
Complutense de Madrid, 28040 -- Madrid, Spain}
\author[\'A. Romaniega Sancho]{\'Alvaro Romaniega}
\address{Instituto de Ciencias Matem\'aticas, c/ Nicol\'as Ca\-brera, No. 13--15, 28049 Madrid, Spain}
\author[P. Tempesta]{Piergiulio Tempesta}
\address{Departamento de F\'{\i}sica Te\'{o}rica, Facultad de F\'{\i}sicas, Universidad
Complutense de Madrid, 28040 -- Madrid, Spain \\ and
Instituto de Ciencias Matem\'aticas, c/ Nicol\'as Ca\-brera, No. 13--15, 28049 Madrid, Spain}
\date{June 18, 2018}

\begin{abstract}
In this work, we study generalized entropies and information geometry in a group-theoretical framework. 
We explore the conditions that ensure the existence of some natural properties and at the same time of a group-theoretical structure for a large class of entropies.
In addition, a method for defining new entropies, using previously known ones with some desired group-theoretical properties is proposed.
In the second part of this work, the information geometrical counterpart of the previous construction is examined and
a general class of divergences are proposed and studied. 
Finally, a method of constructing new divergences from  known ones is discussed; in particular, some results concerning the Riemannian structure associated with the class of divergences under
investigation are formulated. 
\end{abstract}

\maketitle

\tableofcontents

\section{\label{sec:level1}Introduction}
The aim of this paper is to introduce a new, large class of entropic information measures coming from group theory and, at the same time, to establish a novel connection between the theory of 
classical generalized entropies and information geometry.

In the last decades, a wealth of new information measures have been proposed in the literature, due to their prominent role in many different scenarios arising from physics, 
social and economical sciences, and in particular classical and quantum information theory.
The famous R\'enyi and Tsallis entropies are two possible generalizations of the well-known Shannon entropy (the equivalent in information theory of Boltzmann-Gibbs entropy in thermodynamics).
They are one-parametric functions which reproduce the Shannon's functional in a specific limit. R\'enyi's entropy was introduced in 1961 as the most general classical information measure 
satisfying \emph{additivity} with respect to the composition of independent events or systems  \cite{Ren,JizAri}. Subsequently, Tsallis entropy was introduced in 1989 as a possible generalized entropy for the 
thermodynamic treatment of long-range systems; it represents an interesting example of a \emph{non-additive} measure \cite{Tsallis1,Tbook}.

More general entropies have also been proposed. For instance, Sharma-Mittal's (SM) entropy was introduced already in 1975; it represents a bi-parametric function that generalizes both R\'enyi's and
Tsallis's ones. The SM entropy was recently studied by Nielsen and Nock, \cite{Nielsen} in the context of information theory.

Starting from 1957, Jaynes \cite{Jay1,Jay2}  established a natural and fundamental correspondence between statistical mechanics and information theory. In his vision, information theory provides an inference methodology to describe general properties of arbitrary systems on the basis of incomplete information. In particular, if the available information of a system is the set of mean values of some random variables, the lest-biased distribution compatible with those constraints is postulated to be the one which maximizes the entropy.

In the last years, a new approach to information theory, called \emph{information geometry} \cite{AmaMe,AmaIG}, has emerged. Information geometry provides a new methodology applicable to various areas including information and physical sciences. Its main objective is the investigation of the geometrical structures that can be introduced in the manifold associated with the set of probability distributions of a statistical model. In this approach, one defines a Riemannian metric in a manifold of probability distributions, together
with dually coupled affine connections. A remarkable example is the celebrated Fisher metric in statistics, which is the unique invariant metric over the manifold of probability distributions according to Chentsov's theorem, see \cite{AmaMe,AmaIG} for details. Information geometry provides new tools useful in various areas of information sciences,
such as statistical inference, quantum information theory, machine learning, convex optimization, time series analysis, etc. It is also a key tool for other areas, such as neuro-computing
(where a set of neural networks  forms  a neuro-manifold, a nonlinear system equipped with the Fisher metric, \cite{AmaMe}). Here we will focus on the concept of \emph{divergence}
(and its associated geometric structure), representing a pseudo-distance over the probability manifold.

In this work, we study generalized entropies and information geometry in a broad sense. 

In order to classify generalized entropies, we adopt here the group-theoretical approach, proposed in \cite{Temp}, based on the notion of \textit{group entropy}. 
Essentially, the approach consists in an axiomatic formulation of the concept of generalized entropy, based on early work by Shannon \cite{Shannon, Shannon2} and Khinchin \cite{Khinchin}. 
More precisely, we require that an entropic function must satisfy the first three Shannon-Khinchin (SK) axioms (continuity, maximum 
entropy principle, null-composability, see below for details). Indeed, these properties are both very natural and fundamental in information theory and statistical mechanics. However, the
additivity axiom, necessary for characterizing uniquely Boltzmann's entropy \cite{Khinchin}, is replaced by a more general axiom of composability. In other words, we impose
that, when composing two statistically independent systems, the entropy of the compound system be a function of the entropy of the component systems only: $S(A\cup B)= \Phi(S(A),S(B))$. This property
is crucially required in all the space of probability distributions representing the macro-states of the system. Mathematically, the rule governing the composition process is essentially
a formal group law, which defines the group-theoretical structure associated with the given entropy in the probability space. When a function satisfies the (SK) axioms jointly with the composability one, 
we shall talk about a group entropy.

The many properties of group entropies ensure the possibility of interpreting them as good information measures in the context of the geometric theory of information, as we shall see in detail, 
and in a broader perspective in statistical mechanics.

Our main result in the theory of generalized entropies is a theorem that allows us to construct a new, 
large class of group entropies possessing relevant mathematical properties. Indeed, we shall propose a mathematical mechanism that permits to generate a ``new'' group entropy from an ``old'' one;
both of them share the same formal group law as the composition law. More generally, this idea can also be implemented by combining several entropic functions at the same time in a way that generates
new entropies with a prescribed group-theoretical structure.

At the same time, we wish to remark that the connections between information geometry and generalized entropies are abundant and well known in the literature.
First of all, the ``associated divergence'' for the Shannon-Boltzmann-Gibbs entropy, namely the  Kullback-Leibler (KL) divergence (also called Shannon relative entropy),
gives rise to a geometric structure of our statistical manifold based on the Fisher metric. We remind that Shannon's relative entropy also comes from another important class of divergences,
the Bregman divergence (\cite{Breg}), associated with exponential families, widely studied in information geometry. Furthermore, if the exponential families are extended using \emph{generalized logarithms},
 Tsallis entropy arises, \cite{AmaQ}, as the ``potential'' of a Legendre transformation. Moreover, these potentials allow us to define a dually flat geometry.
The same construction can be repeated starting from a general deformed logarithm proposed in \cite{AmaGen}, which gives rise to the notion of $\chi$-entropy.
It is also worth-mentioning that the $\alpha$-divergences in information geometry are directly related to the $\alpha$-entropy of R\'enyi.
 The specific conditions ensuring the existence of certain crucial geometrical properties are studied. A relation among generalized entropies, information theory and statistical
physics is proposed in \cite{Naudts}. Also, very recently, in the interesting paper \cite{GBP} the Fisher metric has been related with the 
universal-group entropy, a trace-form class of entropic functions which has been proposed in \cite{TempestaAP2016}.

\vspace{2mm}

The article consists of two parts closely related, one devoted to the theory of generalized entropies and the other one focused on their information-geometrical interpretation.
Let us sum up the main results of this work. In Section 2, we briefly review the main properties of group entropies.
In Section \ref{sec:Shf}, we consider a large class of entropic functions, the $S_{h,f}$-entropies and establish their group theoretical structure.
In Section \ref{sec:EntFun}, we introduce a composition mechanism that allows to construct new entropies from known ones
and we study the corresponding group-theoretical structure, see Section 5 for examples. The main results of classical information geometry are briefly reviewed in Section 6. 
In Section \ref{sec:DivFun} a mechanism for defining new semi-divergences (see below) using known ones is introduced,  and its geometry is studied. In the final Section \ref{sec:Dhf} the divergence $D_{h,f}$ is proposed and we study the associated geometry.

This work represents a foundational step of a broader research project on the relationship between group theory, generalized entropies and information geometry. 
Among many possible open research lines, we mention  two that seem to be very promising: the application of our divergences in the study of statistical
manifolds and problems of statistical inference and, at the same time, a study of the quantum version of the theory here developed, possibly in connection with the tomographic approach to quantum mechanics.

\section{Group entropies: an introduction}

In order to make the subsequent discussion self-contained, first we shall briefly summarize the main results of the theory of group entropies. The motivation is to provide an axiomatic formulation
of the theory of generalized entropies, that would allow us to connect them to information theory and statistical mechanics.

We recall that the Shannon-Khinchin axioms were proposed independently by Shannon and Khinchin as properties characterizing uniquely the mathematical form of Boltzmann's entropy as a function 
$S(p_1,\ldots,p_W)$ in the space of probability distributions. They can be stated as follows.

(SK1) (Continuity). The function $S(p_1,\ldots,p_W)$ is continuous with respect to all its arguments

(SK2) (Maximum principle).  The function $S(p_1,\ldots,p_W)$ takes its maximum value for the uniform distribution $p_i=1/W$, $i=1,\ldots,W$.

(SK3) (Expansibility). Adding an impossible event to a probability distribution does not change its entropy: $S(p_1,\ldots,p_W,0)=S(p_1,\ldots,p_W)$.

(SK4) (Additivity). Given two subsystems $A$, $B$ of a statistical system,
\[
S(A \cup B)=S(A)+S(B\mid A).
\]
It is clear that, by relaxing one of these axioms, new possibilities arise. The first three axioms are very natural properties, which are crucial both in information theory and statistical mechanics.
 Let $\mathcal{P}_W$ denote the set of all probability distributions of the form $p=(p_1,\ldots,p_W)$, $p_i\geq 0$, $\sum_i p_1 =1$.
\begin{defin}
A function $S: \mathcal{P}_W \to \mathbb{R}_{\geq0}$ that satisfies the axioms (SK1)--(SK3) will be said to be an entropic function (or a generalized entropy).
\end{defin}

Instead, the additivity axiom (SK4) has been replaced in \cite{TempestaAP2016}, \cite{TempestaPRA2016} by a more general requirement, called the \textit{composability axiom}. 
Precisely, given two \textit{statistically independent} systems $A$ and $B$, we require that the entropy satisfies the composition law
\begin{equation}
S(A\cup B)= \Phi(S(A), S(B))
\end{equation}
where $\Phi$ should fulfill some properties in order to have a physical meaning:
\begin{enumerate}
\item \label{C} $\Phi(x,y)=\Phi(y,x)$ ({\it commutativity})
\item  $\Phi(x,0)=x$ ({\it identity)}
\item  \label{A} $\Phi(x,\Phi(y,z))=\Phi(\Phi(x,y),z)$ ({\it associativity)}
\end{enumerate}
 These properties are necessary to ensure that a given entropy may be suitable for information-theoretical and
thermodynamic purposes. Indeed, it should be symmetric with respect
to the exchange of the labels $A$ and $B$ (i.e., commutativity). Furthermore, if we compose a given
system with another system in a state with zero entropy, the total entropy should
coincide with that of the given system (i.e., identity). Finally, the composability of more than two
independent systems in an associative way is crucial to ensure path independence in the composition.

If the composability axiom is satisfied in full generality, namely for all possible distributions defined in the space $\mathcal{P}_W$, the corresponding entropy will be said to be 
\emph{strictly composable}.
If it is satisfied on the uniform distribution only, the entropy is \textit{weakly composable}.
\begin{defin}
A group entropy is an entropy that satisfies the first three Shannon-Khinchin axioms and is strictly composable.
\end{defin}

\section[\texorpdfstring{(h,f)-entropies}
                        {Math symbols sum, integral}]{${S_{h,f}}$ entropies and group entropies}\label{sec:Shf}

In this Section, we shall study a class of entropic functionals, introduced in \cite{Salic}, called the $S_{h,f}$-entropies. Their main interest is of a theoretical nature, 
since very many entropies belong
 to this family. Due to their relevance in classical and quantum information theory, we shall discuss here their group-theoretical interpretation and clarify under which conditions 
these functionals  are group entropies.

\subsection{$S_{h,f}$ entropies and properties}
 We define  the $(h,f)$-entropy class as follows.
\begin{defin} \cite{Salic}, \cite{Zoz} \label{def:Sgfg} Let $f:[0,1]\to \mathbb{R}_{\ge 0}$ be a continuous, concave (respect. convex) function and $h$ be a continuous and increasing 
(respect. decreasing) map. Then, for $p\in \mathcal{P}_W$, the function
\begin{equation} \label{Shf}
S_{h,f}(p):=h\left(\sum_{i=0}^W f(p_i)\right),
\end{equation}
where $h(f(1))=0$ and $f(0)=0$, will be called the $S_{h,f}$ entropy.
\end{defin}
\begin{nota}
A huge clase of entropies belong to the class \eqref{Shf}: among them, the well known entropies of Shannon, R\'enyi,  Tsallis, Kaniadakis, Borges-Roditi and Sharma-Mittal.
\end{nota}
\begin{teo} \label{Teo:Sgfg} Let be a $S_{h,f}$ entropy. Then

a) $S_{h,f}$ is an entropic function. 

b) If $f$ is strictly concave (respect. convex) with $h$ strictly increasing (respect. decreasing), then $S_{h,f}(p)=0$  iff $p_i=1$ 
for some $i\in \{1,\ldots,W \}$  and $S_{h,f}(p)$ reaches its maximun over the uniform distribution only.
\end{teo}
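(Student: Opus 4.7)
The plan is to verify the three Shannon--Khinchin axioms (SK1)--(SK3) together with non-negativity for part (a), and then to upgrade the key inequalities to strict ones in order to obtain the characterizations in part (b). The whole argument hinges on two classical inequalities applied to $f$: the chord/secant comparison at $0$ for a convex or concave function vanishing at $0$, and Jensen's inequality, whose interplay with the monotonicity of $h$ dictates the direction of every bound.

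For (SK1), continuity of $S_{h,f}$ is immediate, since it is a composition of the continuous functions $f$ and $h$ through a finite sum. For (SK3), appending a zero probability appends $f(0)=0$ to the argument of $h$, leaving $S_{h,f}$ unchanged. For non-negativity (needed so that $S_{h,f}\colon\mathcal{P}_W\to\mathbb{R}_{\geq 0}$), I would invoke the chord comparison at $0$: a concave $f$ with $f(0)=0$ satisfies $f(p)\geq p\,f(1)$ on $[0,1]$, and a convex $f$ with $f(0)=0$ satisfies the reverse. Summing over $i$ and using $\sum_i p_i=1$ gives
\[
\sum_{i} f(p_i) \geq f(1) \quad\text{(concave case)}, \qquad \sum_{i} f(p_i) \leq f(1) \quad\text{(convex case)}.
\]
In both cases the corresponding monotonicity of $h$, together with the normalization $h(f(1))=0$, forces $S_{h,f}(p)\geq 0$.

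For (SK2), Jensen's inequality applied to $f$ yields $\frac{1}{W}\sum_i f(p_i)\leq f(1/W)$ in the concave case and the reverse in the convex case. Composing with $h$ (increasing or decreasing, respectively) gives, in both cases, $S_{h,f}(p)\leq h\bigl(Wf(1/W)\bigr)=S_{h,f}(u_W)$, where $u_W$ denotes the uniform distribution, establishing (SK2). For part (b), strict concavity (or convexity) of $f$ combined with strict monotonicity of $h$ turn both inequalities into strict ones except in degenerate configurations. The chord comparison is tight at $p_i$ only when $p_i\in\{0,1\}$, so $\sum_i f(p_i)=f(1)$ forces some $p_j=1$ and the rest to vanish; conversely, such a Dirac mass yields $S_{h,f}=h(f(1))=0$. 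Jensen, on the other hand, is tight only when all $p_i$ are equal, i.e.\ $p=u_W$, so the maximum is attained solely there.

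The only real work is bookkeeping: the two sign conventions (concave/increasing versus convex/decreasing) must be tracked in parallel, so that both the chord and Jensen estimates compose with the monotonicity of $h$ to land on the correct sides of $0$ and of $S_{h,f}(u_W)$ respectively. No deeper obstacle appears, since all the ingredients are classical and neither the composability axiom nor any group-theoretical structure is invoked at this stage.
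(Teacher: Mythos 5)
Your proposal is correct and follows essentially the same route as the paper's proof: the chord comparison $f(x)\geq xf(1)$ (from concavity and $f(0)=0$) for non-negativity and the characterization of the zero set, Jensen's inequality for (SK2) and the uniqueness of the maximizer, with the concave/increasing and convex/decreasing cases tracked in parallel. Your treatment of the equality cases in part (b) is in fact slightly more explicit than the paper's, but no new idea is involved.
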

\begin{proof} a) The axioms (SK1) and (SK3) are straightforwardly satisfied, because $S_{h,f}$ is a composition of continuous maps and $f(0)=0$. For proving non-negativity of $S_{h,f}$, 
observe that being $f$ concave, if $x\in[0,1]$ we have $$f(1)x=f(1)x+f(0)(1-x)\leq f(x).$$ Also, $f(1)=f(1)\sum_i p_i \leq\sum_i f(p_i)$. Then, 

\begin{equation} \label{cero}
0=h(f(1))\leq h\left(\sum_i f(p_i)\right).
\end{equation} 

If $f$ is convex and $h$ is decreasing, the last equality is still valid.

Concerning (SK2), when $f$ is concave, by Jensen's inequality we have 
\begin{equation} \label{Jensen}
\sum_{i=1}^W f(p_i)\leq Wf\left(\sum_{i=1}^W p_i\frac{1}{W}\right)=\sum_{i=1}^{W} f\left(\frac{1}{W}\right) .  
\end{equation}
Thus, $$S_{h,f}(p)=h\left(\sum_i f(p_i)\right)\leq h\left(\sum_i f\left(\frac{1}{W}\right)\right) = S_{h,f}(\bar{p}) \ ,$$
where $\bar{p}$ is the uniform distribution. As before, if $f$ is convex and $h$ is decreasing, the last equality still holds. 

b) If $f$ is strictly concave and $h$ strictly monotone, then inequality \eqref{cero} converts into an equality only for the certainty distribution $p_i=1$ for some $i\in \{1,\ldots,W\}$. 

Also, note that the inequality \eqref{Jensen} is now strict, which implies that the uniform distribution, being a maximum one, is the unique one.
Analogous considerations hold for the case of $f$ strictly convex and $h$ strictly decreasing.

\end{proof}

\begin{teo} \label{Teo:CompShf}
Let $S_{h,f}$ be an $(h,f)$-entropy and $A$ and $B$ two statistically independent systems, characterized by two independent distributions $p\in \mathcal{P}_W$, $q\in\mathcal{P}_{W'}$.
Assume that the relation
\begin{equation}\label{form:psi}
\sum_{i,j} f(p_i q_j)=\chi\left(\sum_i f(p_i),\sum_j f(q_j)\right)
\end{equation}
is satisfied for a certain function $\chi(x,y)$. Then the function $\Phi$ given by
\[
\Phi(x,y)=h\left(\chi(h^{-1} (x),h^{-1}(y))\right)
\]
determines the composition law for the entropy $S_{h,f}$:
\begin{equation}\label{clshf}
S_{h,f}(A\cup B) = \Phi (S_{h,f}(A),S_{h,f}(B)) \ .
\end{equation}
Moreover, $S_{h,f}$ will be strictly composable if and only if $\chi$ is associative, commutative and $\chi(x, f(1))=x$.
\end{teo}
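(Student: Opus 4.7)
The plan is to split the theorem into two parts: first, derive the explicit formula for $\Phi$ by direct substitution; second, characterize strict composability by transferring each required axiom on $\Phi$ to a corresponding condition on $\chi$, exploiting the fact that $h$ is a bijection onto its image.

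For the composition law itself, I would start from the observation that statistical independence gives the joint distribution $\{p_iq_j\}$, so
\[
S_{h,f}(A\cup B) = h\!\left(\sum_{i,j}f(p_iq_j)\right) = h\!\left(\chi\!\left(\sum_i f(p_i),\sum_j f(q_j)\right)\right)
\]
by hypothesis \eqref{form:psi}. Since $h$ is strictly monotone by Definition \ref{def:Sgfg}, it is invertible on its image, and $\sum_i f(p_i)=h^{-1}(S_{h,f}(A))$, with the analogous identity for $B$. Substituting yields $\Phi(S_{h,f}(A),S_{h,f}(B))$ with $\Phi$ as stated.

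For the equivalence, I would verify each of the three composability properties by transferring it across $h$ using its injectivity. Commutativity of $\Phi$ reduces, after cancelling the outer $h$, to commutativity of $\chi$ at the pair $(h^{-1}(x),h^{-1}(y))$. The identity $\Phi(x,0)=x$ becomes $\chi(h^{-1}(x),h^{-1}(0))=h^{-1}(x)$, and since $h(f(1))=0$ by Definition \ref{def:Sgfg} we have $h^{-1}(0)=f(1)$, which gives $\chi(u,f(1))=u$ with $u=h^{-1}(x)$. Finally, associativity of $\Phi$ unwinds via repeated cancellations $h^{-1}\circ h=\mathrm{id}$ in the nested compositions to associativity of $\chi$. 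The converse direction is an immediate verification: if $\chi$ satisfies the three properties, then $\Phi(x,y)=h(\chi(h^{-1}(x),h^{-1}(y)))$ automatically inherits them by the same algebraic manipulations run in reverse.

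The main obstacle I anticipate is domain-tracking. Strict composability demands that the axioms hold for every pair of distributions in $\mathcal{P}_W\times\mathcal{P}_{W'}$ with arbitrary $W,W'$, so the deduced identities on $\chi$ must hold on the full set of values $\sum_i f(p_i)$ attainable as $W$ and $p$ vary. To make both implications of the ``iff'' clean, I would argue that continuity of $f$, together with the freedom to vary $W$ (expansibility $f(0)=0$ permits enlarging $W$ without altering the sums), ensures that this set covers the natural domain of $\chi$ appearing in \eqref{form:psi}; once this reconciliation is in place, injectivity of $h$ combined with the substitutions above delivers the equivalence.
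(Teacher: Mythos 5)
Your proposal is correct and follows essentially the same route as the paper: derive the composition law by substituting \eqref{form:psi} and inverting $h$, then transfer commutativity, associativity and the identity axiom between $\Phi$ and $\chi$ via the conjugation by $h$, using $h^{-1}(0)=f(1)$. If anything, you are slightly more thorough than the paper, which only writes out the ``if'' direction of the equivalence and does not discuss the domain/range issue you flag; your attention to the set of attainable values of $\sum_i f(p_i)$ is a legitimate refinement rather than a deviation.
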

\begin{proof} Given a $S_{h,f}(p)$ entropy and two independent distributions $p$, $q$, we obtain
\begin{eqnarray}
\nn S_{h,f}(p\cdot q)&=&h\bigg(\sum_{i,j} f(p_iq_j)\bigg)=h\bigg(\chi\bigg(\sum_i f(p_i),\sum_j f(q_j)\bigg)\bigg) \\
&=& h\bigg(\chi\bigg(h^{-1}(S_{h,f}(p)), h^{-1}(S_{h,f}(q))\bigg)\bigg) \ .
\end{eqnarray}
Therefore, if we introduce the function $\Phi(x,y):=h\left(\chi(h^{-1} (x),h^{-1}(y))\right)$ we get immediately the composition law \eqref{clshf}.

Let us prove the strict composability of $S_{h,f}$. It is easy to ascertain that the commutativity property: $\Phi(x,y)=\Phi(y,x)$ is ensured by the commutativity of $\chi(x,y)$.
The associativity property is also checked
\begin{eqnarray}
\nn \Phi(x,\Phi(y,z))&=&h\big(\chi(h^{-1}(x),h^{-1}(\Phi(y,z))\big)=h\big(\chi(h^{-1}(x),\chi(h^{-1} (y), h^{-1}(z))\big) \\
\nn &=&h\big(\chi(\chi(h^{-1}(x),h^{-1} (y)),h^{-1}(z)\big) = h\big(\chi(h^{-1}(\Phi(x,y)),h^{-1}(z)\big) =\\
&=&\Phi(\Phi(x,y),z) .
\end{eqnarray}
Finally,
\[
\Phi(x,0)=h\left(\chi(h^{-1} (x),h^{-1}(0))\right),
\]
but $h^{-1}(0)=f(1)$, so $\Phi(x,0)=h\left(\chi(h^{-1} (x),f(1))\right)$. Now, assuming $\chi(y,f(1))=y$, we get immediately $\Phi(x,0)=x$.

\end{proof}
\begin{nota}
If  in the previous theorem we assume that $f(1)=0$, then we have that $\Phi$ is a group law if and only if $\chi$ is. The condition $f(1)=0$ is satisfied in many important cases, as the
entropies of Boltzmann-Gibbs, Tsallis, Kaniadakis, etc.
\end{nota}
\begin{nota}
In Theorem 1 of \cite{EncTemp} the following result has been proved: suppose to have a trace-form entropy $S=\sum_{i}f(p_i)$ where $f$ is a function  of class $C^{2}((0,1)) \cap C^{1}([0,1])$, 
with $f'$ not vanishing in any open interval, and assume that the condition \eqref{form:psi} holds with $\chi$ of class $C^{1}$, then $S$ coincides with the (trace-form version of)
Tsallis entropy  $S_{q}= \sum_i p_i \ln_{q} \frac{1}{p_i}$, where $\ln_{q}(x)= \frac{x^{1-q}-1}{1-q}$.  
Under suitable hypotheses, the problem of composability of $S_{h,f}$ was also consider in Theorem 2 of \cite{EncTemp}, obtaining $f(t)=at+bt^q$. In the previous
Theorem \ref{Teo:CompShf} these additional hypotheses are not assumed.
\end{nota}

\section{New entropy functions from formal groups} \label{sec:EntFun}
The aim of this section is to present a new class of entropic information measures by means of formal group theory. Our main result is a theorem that allows to construct iteratively new entropic functions
from old ones (for instance, $S_{h,f}$ studied above), sharing the same group theoretical structure.

Let us first define the following partial order on $\R^m$. We shall say that
\begin{equation} \label{PO}
\textbf{x}\leq \textbf{y} \quad \text{iff} \quad x_i\leq y_i \quad \foralll i\in\{1,..,m\} .
\end{equation}
It is not difficult to show the following result.
\begin{prop}\label{prop:zetaS} Let $\{S_j\}_{j=1}^m$ be a family of entropic functions  and let $\zeta: \Omega\subset \R^m\to\mathbb{R}$, with $\R_{\ge0}^n\subset \Omega$, be a
continuous function such that its restriction to $\R_{\ge0}^n$ has range $\R_{\ge0}$; also assume that $\zeta$ is increasing with respect to the partial order \eqref{PO}:
$\zeta(\boldsymbol{x})\leq\zeta(\boldsymbol{y})$ whenever $\boldsymbol{x}\leq \boldsymbol{y}$.
Then, $S(p):=\zeta\left((S_i(p))_{i=1}^m\right)$ is a new entropic function.
\end{prop}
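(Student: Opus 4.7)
The plan is to verify directly that $S(p):=\zeta\bigl((S_i(p))_{i=1}^m\bigr)$ satisfies the three defining properties of an entropic function, namely (SK1), (SK2), (SK3), together with nonnegativity, using only the stated hypotheses on $\zeta$ and on the family $\{S_j\}_{j=1}^m$. Each item reduces to a one-line argument, so the proof is essentially a bookkeeping check; the only thing worth being careful about is the correct use of the partial order \eqref{PO}.

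First I would address well-definedness and nonnegativity. Since each $S_j$ is an entropic function, $S_j(p)\ge 0$ for every $p\in\mathcal{P}_W$, hence the vector $\bigl(S_1(p),\dots,S_m(p)\bigr)$ lies in $\mathbb{R}_{\ge 0}^m\subset\Omega$, so $S(p)$ is defined. The hypothesis that $\zeta\bigl(\mathbb{R}_{\ge 0}^m\bigr)\subset\mathbb{R}_{\ge 0}$ then yields $S(p)\ge 0$, so $S:\mathcal{P}_W\to\mathbb{R}_{\ge 0}$.

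Next, I would check (SK1) and (SK3). Continuity (SK1) is immediate: $S$ is a composition of the continuous map $p\mapsto(S_1(p),\dots,S_m(p))$ with the continuous function $\zeta$. For expansibility (SK3), since each $S_j$ satisfies it, we have
\begin{equation*}
S_j(p_1,\dots,p_W,0)=S_j(p_1,\dots,p_W), \qquad j=1,\dots,m,
\end{equation*}
so applying $\zeta$ to the resulting vectors gives $S(p_1,\dots,p_W,0)=S(p_1,\dots,p_W)$.

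Finally, I would verify (SK2), which is the one place where monotonicity of $\zeta$ is actually used. Let $\bar p$ denote the uniform distribution on $W$ outcomes. Each $S_j$ satisfies (SK2), so $S_j(p)\le S_j(\bar p)$ for all $p\in\mathcal{P}_W$ and all $j$; in terms of the partial order \eqref{PO} this reads $\bigl(S_j(p)\bigr)_{j=1}^m \le \bigl(S_j(\bar p)\bigr)_{j=1}^m$. Monotonicity of $\zeta$ then gives $S(p)\le S(\bar p)$, proving (SK2). The main (and only mild) obstacle is keeping track that the vector of entropies stays inside the domain $\Omega$ where $\zeta$ is defined and monotone, which is guaranteed by the inclusion $\mathbb{R}_{\ge 0}^m\subset\Omega$; once this is noted, all three axioms follow from the corresponding properties of the $S_j$'s essentially by transport through $\zeta$.
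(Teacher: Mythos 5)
Your proof is correct and is exactly the straightforward componentwise verification the paper has in mind (the paper omits the proof, stating only that ``it is not difficult to show''): nonnegativity from the range condition, (SK1) by composition of continuous maps, (SK3) by invariance of the vector of entropies, and (SK2) from monotonicity of $\zeta$ with respect to the partial order. No gaps.
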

\begin{ex}[Polynomial composition of entropies] As a special realization of the previous result, we can define the following multi-parametric polynomial function:
$$S(p)=\sum_{i=1}^n\sum_{i_1+...i_n=i}\alpha^i_{i_1...i_n}\prod_{j=1}^m(S_j(p))^{i_j} \, $$
where $\alpha^i_{i_1...i_n}\ge 0$, which is still an entropic function. Note that $S(p)$ is not a $S_{h,f}$ entropy.
\end{ex}

In order to study the composability properties of entropy functions, let us prove a technical lemma.
\begin{lema} \label{Lema:Per}Let $\Phi$ be a group law. Then,
\begin{equation}
\Phi(\Phi(x_1,x_2),\Phi(x_3,x_4))=
\Phi(\Phi(x_{\sigma(1)},x_{\sigma(2)}),\Phi(x_{\sigma(3)},x_{\sigma(4)}))
\end{equation}
for $\sigma\in S_4$ (permutation group of four elements).
\end{lema}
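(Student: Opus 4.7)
The plan is to exploit the fact that a group law, being a commutative and associative binary operation, makes any iterated composition of four elements depend only on the underlying multiset of arguments, independently of both bracketing and order.

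First I would observe that by repeated use of associativity, every bracketing of the four-fold composition of $x_1, x_2, x_3, x_4$ yields the same value. In particular, for any ordered quadruple $(y_1, y_2, y_3, y_4)$ we have
\begin{equation*}
\Phi(\Phi(y_1,y_2),\Phi(y_3,y_4)) = \Phi(\Phi(\Phi(y_1,y_2),y_3),y_4) = \Phi(y_1,\Phi(y_2,\Phi(y_3,y_4))),
\end{equation*}
and any other parenthesization gives the same result. Denote this common value by $[y_1,y_2,y_3,y_4]$.

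Next I would show that $[y_1, y_2, y_3, y_4]$ is invariant under any permutation $\sigma \in S_4$. Since $S_4$ is generated by the adjacent transpositions $(1\,2)$, $(2\,3)$, $(3\,4)$, it suffices to verify invariance under each of these. For $(1\,2)$ and $(3\,4)$ the claim is immediate from the commutativity of $\Phi$ applied to the innermost compositions in the bracketing $\Phi(\Phi(y_1,y_2),\Phi(y_3,y_4))$. For $(2\,3)$ I would first reassociate using the previous step to obtain $\Phi(y_1, \Phi(\Phi(y_2, y_3), y_4))$, then apply commutativity inside the inner $\Phi(y_2, y_3)$, and then reassociate back. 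Combining these generators shows that $[y_{\sigma(1)}, y_{\sigma(2)}, y_{\sigma(3)}, y_{\sigma(4)}] = [y_1, y_2, y_3, y_4]$ for every $\sigma \in S_4$, which is precisely the statement of the lemma.

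The argument is essentially routine once the right viewpoint is adopted; no use of the identity axiom is required. The only mild obstacle is the bookkeeping of the reassociation steps needed to bring a non-adjacent pair of arguments into a position where commutativity can be invoked, but this is entirely handled by the first step above, which guarantees that \emph{every} bracketing produces the same element $[y_1, y_2, y_3, y_4]$.
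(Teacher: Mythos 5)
Your proposal is correct and follows essentially the same route as the paper: both arguments reduce the general permutation to transpositions generating $S_4$, dispose of $(1\,2)$ and $(3\,4)$ by commutativity of the inner compositions, and handle the swap of $x_2$ and $x_3$ by reassociating until they sit inside a common $\Phi$, commuting, and reassociating back. Your preliminary observation that all bracketings coincide is just a slightly more systematic packaging of the same chain of associativity steps the paper writes out explicitly.
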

\begin{proof}
It is enough to prove that $x_2$ and $x_3$ can be transposed. Indeed, transpositions of (1,2) and (3,4) hold by \eqref{C} (in the $\Phi$ conditions) and transpositions (1,3), (1,4) and (2,4)
can be easily decomposed using (2,3) and (1,2) or (3,4). As any permutation can be written as the composition of transpositions, the result holds. So, let us check the transposition (2,3):
\begin{align*}
&\Phi(\Phi(x_1,x_2),\Phi(x_3,x_4))=\Phi(x_1,\Phi(x_2,\Phi(x_3,x_4)))\\
&=\Phi(x_1,\Phi(\Phi(x_2,x_3),x_4))=\Phi(x_1,\Phi(\Phi(x_3,x_2),x_4))\\
&=\Phi(x_1,\Phi(x_3,\Phi(x_2,x_4)))=\Phi(\Phi(x_1,x_3),\Phi(x_2,x_4)),
\end{align*}
where we have used (in order), $\eqref{A}$ for the external $\Phi$, \eqref{A} for the internal $\Phi$, \eqref{C} and then reverse the transpositions in the opposite way (same steps, reverse order).
\end{proof}

We are ready to generate new group entropies from old ones, by means of an iterative procedure.
\begin{defin}
Let $\Phi(x,y)$ be a group law. We introduce the function $\Phi^{2^m}:\R^{2^m}\to\R$ inductively:
\begin{eqnarray}
\Phi^{1}(x):&=&id  \qquad \text{and} \\
\Phi^{2^{m+1}}(x_1,...,x_{2^{m+1}}):&=&\Phi(\Phi^{2^m}(x_1,...,x_{2^{m}}),\Phi^{2^m}(x_{2^{m}+1},...,x_{2^{m+1}}))
\end{eqnarray}
 with $m\in\mathbb{N}=\{0,1,2,...\}$.
\end{defin}
As we have shown in Lemma \ref{Lema:Per},
\[
\Phi^4(x_1,x_2,x_3,x_4)=\Phi^4(x_{\sigma(1)},x_{\sigma(2)},x_{\sigma(3)},x_{\sigma(4)})
\]
$\foralll \sigma\in S_4$. The following result ensures that new entropy functions based on the iterated composition of group laws are indeed group entropies.

\begin{teo}\label{prop:CompZeta} Let $\xi: \mathbb{R}\to\mathbb{R}$ be a strictly increasing and continuous function. Let $\Phi(x,y)$ be a formal group law, $n=2^{m}$, $m=0,1,2,\ldots$ and
let $\zeta: \mathbb{R}^{n} \to \mathbb{R}$,
with $\zeta:=\xi\circ\Phi^{n}$, be a continuous function. Assume that $\{S_j(p)\}_{j=1}^n$ are group entropies, sharing the same composition law $\Phi(x,y)$.
Then, the new entropy $Z(p):=\zeta\left(S_1(p), \ldots,S_{n}(p)\right)$  satisfies the following composition law:
\[
Z(A\cup B)= \omega(Z(A), Z(B))
\]
with
\begin{equation} \label{omega}
\omega(x,y)=\xi(\Phi(\xi^{-1}(x),\xi^{-1}(y)))\ .
\end{equation}
Furthermore, if $\xi(0)=0$, the new entropy will satisfy the group properties.
\end{teo}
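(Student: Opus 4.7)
The plan is to reduce the claim about $Z(A\cup B)$ to a ``distributivity'' identity between the outer $\Phi$ in each factor and the iterated composition $\Phi^{n}$, and then to read off the group properties of $\omega$ directly from those of $\Phi$. The starting point is the strict composability of each $S_j$: since all $S_j$ share the same composition law $\Phi$, we have $S_j(A\cup B)=\Phi(S_j(A),S_j(B))$ for $j=1,\ldots,n$. Writing $A_j:=S_j(A)$ and $B_j:=S_j(B)$, the definition of $Z$ gives
\[
Z(A\cup B)=\xi\bigl(\Phi^{n}\bigl(\Phi(A_1,B_1),\Phi(A_2,B_2),\ldots,\Phi(A_n,B_n)\bigr)\bigr).
\]
The whole task is therefore to show that the inner $\Phi$ can be ``pulled out'' of $\Phi^{n}$.

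The central technical step I would isolate as an auxiliary identity, valid for every $n=2^{m}$, is
\[
\Phi^{n}\bigl(\Phi(a_1,b_1),\ldots,\Phi(a_n,b_n)\bigr)=\Phi\bigl(\Phi^{n}(a_1,\ldots,a_n),\,\Phi^{n}(b_1,\ldots,b_n)\bigr).
\]
I would prove this by induction on $m$. The base case $m=0$ is trivial since $\Phi^{1}=\mathrm{id}$, and $m=1$ is precisely the transposition $(2,3)$ already established in Lemma \ref{Lema:Per}. For the inductive step from $n$ to $2n$, one expands both sides by the recursive definition of $\Phi^{2n}$, applies the induction hypothesis inside each of the two $\Phi^{n}$ blocks, and then invokes Lemma \ref{Lema:Per} at the top to swap the second and third arguments in the resulting expression $\Phi(\Phi(x_1,x_2),\Phi(x_3,x_4))$, where $x_1,x_3$ collect the $a$'s and $x_2,x_4$ the $b$'s. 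This is the only delicate step, and I expect it to be the main obstacle since one has to keep the indexing of $\Phi^{n}$ consistent while invoking the $n=2$ swap at the right level.

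Once the identity is in hand, the composition law follows by one line:
\[
Z(A\cup B)=\xi\bigl(\Phi(\Phi^{n}(A_1,\ldots,A_n),\Phi^{n}(B_1,\ldots,B_n))\bigr)=\xi\bigl(\Phi(\xi^{-1}(Z(A)),\xi^{-1}(Z(B)))\bigr)=\omega(Z(A),Z(B)),
\]
using that $\xi$ is a bijection (it is strictly increasing and continuous) and that $\zeta=\xi\circ\Phi^{n}$.

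For the final assertion, I would verify the three group axioms for $\omega$ directly. Commutativity of $\omega$ follows at once from that of $\Phi$. For associativity, $\omega(x,\omega(y,z))=\xi(\Phi(\xi^{-1}(x),\Phi(\xi^{-1}(y),\xi^{-1}(z))))$; the inner $\xi\circ\xi^{-1}$ cancels since $\xi^{-1}(\omega(y,z))=\Phi(\xi^{-1}(y),\xi^{-1}(z))$, and associativity of $\Phi$ closes the computation symmetrically with $\omega(\omega(x,y),z)$. Finally, if $\xi(0)=0$ then $\xi^{-1}(0)=0$, whence $\omega(x,0)=\xi(\Phi(\xi^{-1}(x),0))=\xi(\xi^{-1}(x))=x$, giving the identity axiom. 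This completes the plan.
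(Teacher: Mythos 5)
Your proposal is correct and follows essentially the same route as the paper: an induction on $m$ whose key step is the swap of the two middle blocks via Lemma \ref{Lema:Per}, followed by a direct verification of the commutativity, associativity and identity axioms for $\omega$. The only cosmetic difference is that you isolate the inductive content as a standalone algebraic identity, $\Phi^{n}(\Phi(a_1,b_1),\ldots,\Phi(a_n,b_n))=\Phi(\Phi^{n}(a_1,\ldots,a_n),\Phi^{n}(b_1,\ldots,b_n))$, whereas the paper carries the entropies $Z_1,Z_2$ through the induction directly.
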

\begin{proof}Let us proceed by induction. For $m=0$, we have $\Phi^{1}=id$ and $\zeta=\xi$. If $S(p)$ is a group entropy, then $Z(p)=\zeta(S(p))$ admits the composition law
\begin{align*}
&Z(p^{AB}) = \zeta(S(p^{AB}))=\zeta(\Phi(S(p^A),S(p^B)))\\
&=\zeta(\Phi(\zeta^{-1}\circ\zeta(S(p^A)),\zeta^{-1}\circ\zeta(S(p^B))))
\end{align*}
and the result follows.

Now, let us consider the case $m+1$. Set $Z_1(p):=\Phi^{2^m}(S_1(p),...,S_{2^{m}}(p))$ and $Z_2(p):=\Phi^{2^m}(S_{2^{m}+1}(p),...,S_{2^{m+1}}(p)))$,
so that
\[
Z(p)=\xi\circ \Phi(Z_1(p), Z_2(p))\ .
\]
Then,
by the induction hypothesis (with $\xi_1=\xi_2=id$), the composition law for $Z(p)$ becomes
\begin{eqnarray*}
Z(p^{AB})&=&\xi\circ \Phi(Z_1(p^{AB}), Z_2(p^{AB})) = \xi\circ \Phi(\Phi(Z_1(p^A), Z_1(p^B)), \Phi(Z_2(p^A), Z_2(p^B))) =\\
&=&\xi\circ \Phi(\Phi(Z_1(p^A), Z_2(p^A)),\Phi(Z_1(p^B), Z_2(p^B)))\ ,
\end{eqnarray*}
where we used Lemma \ref{Lema:Per}. We finally get:
\[
Z(p^{AB})=\xi\circ \Phi(\xi^{-1}(Z(p^A)),\xi^{-1}(Z(p^B))) \ .
\]
Thus we have proved that $Z(p)$ admits the composition law given by \eqref{omega}.
\par
Let us prove now that $\omega(x,y)$ is a group law. Commutativity is evident since $\Phi$ is commutative. Concerning associativity, we observe that
\begin{eqnarray}
\nn \omega(x,\omega(y,z))&=&\xi\big(\Phi(\xi^{-1}(x), \Phi(\xi^{-1}(y),\xi^{-1}(z)))\big)= \xi\big(\Phi(\Phi(\xi^{-1}(x), \xi^{-1}(y)),\xi^{-1}(z))\big) \\
\nn &=& \omega(\omega(x,y),z).
\end{eqnarray}
By hypothesis $\xi(0)=0$, so $\xi^{-1}(0)=0$ and $$\omega(x,0)=\xi\circ\Phi(\xi^{-1}(x),0)=\xi\circ\xi^{-1}(x)=x,$$ which proves that $\omega(x,y)$ is indeed a group law. 
We conclude that $Z(p)$ is a group entropy.
\end{proof}

\section{Examples: New multiparametric group entropies}

Using the theoretical framework developed above, we shall construct new examples of group entropies associated with the multiplicative formal group law, which is the simplest nonadditive group law.  
To this aim, we shall consider
Tsallis entropy and Sharma-Mittal entropy, and we will combine them.
\subsection{Tsallis and Sharma-Mittal entropies}
Tsallis entropy was introduced in \cite{Tsallis1} as a generalization of Boltzmann-Gibbs entropy for long-range weakly chaotic systems (see \cite{Tbook} for a general discussion). It reads
\begin{equation}
S_{q}:= \frac{1-\sum_{i} p_{i}^{q}}{q-1}.
\end{equation}

An interesting generalization of Tsallis entropy is the bi-parametric entropy defined in 1975 by Sharma and Mittal (SM) in information theory. It has the form
\begin{equation}
S_{\alpha\beta}(p):=\frac{1}{1-\beta }\bigg[\bigg(\sum _{i=0}^W p_i^{\alpha }\bigg)^{\frac{1-\beta }{1-\alpha }}-1\bigg],
\end{equation}
where $\alpha>0$, and $\alpha,\beta\neq 1$. It is straightforward to show that for $\beta=\alpha $ we recover  Tsallis entropy and that when $\beta\to 1$ it reduces to 
R\'enyi's entropy \cite{Ren,JizAri}.
As is well known, both Tsallis and R\'enyi's entropies reduce to   Shannon's entropy in the limit when their parameter tends to $1$.

Assuming that the probability distribution satisfies:
$$p=(p_{ij}), \quad p_{ij}=p_i^Ap_j^B, \hspace{1mm} \foralll i,j $$
where $p^A\in \mathcal{P}_W,p^B\in P_{W'}$, then:
\begin{equation}\label{SM}
S_{\alpha\beta}(p)=S_{\alpha\beta}(p^A)+S_{\alpha\beta}(p^B)+(1-\beta)S_{\alpha\beta}(p^A)S_{\alpha\beta}(p^B)
\end{equation}
The Sharma-Mittal entropy is therefore a group entropy, whose composability law is given by the multiplicative formal group, as in the case of Tsallis entropy:
\begin{equation} \label{multip}
\Phi(x,y)=x+y+(1-q)xy=:x\oplus_q y .
\end{equation}
The above function is also known in the literature as the \emph{q-sum} \cite{Tbook}.

SM entropy has also been studied in the context of statistical mechanics \cite{FraPlas, Masi},  in particular for the description of anomalous diffusion phenomena \cite{FraDaff}.

We will consider a very simple case of composition of entropies, when $\xi= id$ and $m=1$. We propose two examples.

\subsection{Combining two copies of Sharma-Mittal entropy} We define the new tri-parametric group entropy 
$$Z_{\alpha_1,\alpha_2,\beta}(p)=S_{\alpha_1,\beta}\oplus_\beta S_{\alpha_2,\beta}.$$ 
Explicitly, it reads
\begin{equation} \label{NE1}
Z_{\alpha_1,\alpha_2,\beta}(p):= \frac{1}{\beta-1}\bigg[1-\bigg(\sum_{i=1}^{W} p_i^{\alpha_1}\bigg)^{\frac{\beta-1}{\alpha_1 -1}} \bigg(\sum_{i=1}^{W} 
p_i^{\alpha_2}\bigg)^{\frac{\beta-1}{\alpha_2 -1}}\bigg].
\end{equation}
If $\beta\in(0,1)$, then the function $\zeta$ satisfies the hypotheses of Proposition \ref{prop:zetaS}. Since each copy $\{S_{\alpha_i,\beta}\}$ 
satisfies the same composition law \eqref{SM}, by the previous theory $S_{\alpha_1,\alpha_2,\beta}$ is again a group entropy, with group law given by 
\begin{equation}\label{SM}
Z_{\alpha_1,\alpha_2,\beta}(p)=Z_{\alpha_1,\alpha_2,\beta}(p^A)+Z_{\alpha_1,\alpha_2,\beta}(p^B)+(1-\beta)Z_{\alpha_1,\alpha_2,\beta}(p^A)Z_{\alpha_1,\alpha_2,\beta}(p^B) \ .
\end{equation}
In other words, the group theoretical structure is still given by the formal group law \eqref{multip}. A numerical analysis shows that the entropy \eqref{NE1} is concave,
for instance, for $\alpha_1,\alpha_2,\beta\in(0,1)$.
\subsection{Combining Tsallis and Sharma-Mittal entropies}
Another interesting possibility arises when we combine Tsallis and Sharma-Mittal entropies by means of Theorem \ref{prop:CompZeta}, choosing again for simplicity $\xi= id$ and $m=1$. 
We define,
$$Z_{\alpha,q}(p):=S_{\alpha,q}\oplus_q S_{q}$$ 
obtaining,
\begin{equation}\label{NE2}
Z_{\alpha,q}=\frac{1}{q-1}\left[1-\bigg(\sum_{i=1}^{W}p_{i}^{q}\bigg)\bigg(\sum_{i=1}^{W}p_{i}^{\alpha}\bigg)^{\frac{q-1}{\alpha-1}}\right].
\end{equation}
This entropy satisfies the composition law 
\begin{equation}\label{Zmult}
Z_{\alpha,q}(p):=Z_{\alpha,q}(p^A)+Z_{\alpha,q}(p^B)+(1-q)Z_{\alpha,q}(p^A)Z_{\alpha,q}(p^B).
\end{equation}
namely, the group theoretical structure is the one defined by the formal group law \eqref{multip}. As in the previous example, one can ascertain that the entropy \eqref{NE2} is concave when 
$\alpha, q \in (0,1)$. 
\subsection{Entropies with more general composition law}
We could also assume $\xi\neq id$ in both examples with $\xi(0)=0$. Then, Proposition \ref{prop:zetaS} is satisfied, so it is an entropic function. Furthermore, by Theorem \ref{Teo:CompShf} 
the group law will be: $$\omega(x,y)=\xi(\xi^{-1}(x)+\xi^{-1}(y)+(1-\beta)\xi^{-1}(x)\xi^{-1}(y)) \ . $$

\section{Classical Information Geometry: a brief review}
First, we shall review some basic concepts in information geometry to fix the notation and make precise the definitions adopted. For a thorough exposition concerning this rapidly evolving field, we suggest to consult the monographs \cite{AmaMe,AmaIG}.
\begin{defin}[Parametric model]
Consider a family $M$ of probability distributions on $\mathcal{A}\subseteq \mathbb{R}^{k}$. We shall suppose that each distribution of $M$ may be parametrized using $n$ real-valued
variables $[\xi_1,\cdots,\xi_n]$ belonging to an auxiliary Euclidean space $\Xi\subset \R^n$. Therefore we can identify
\[
M=\{ p_\xi=p(x;\xi) \ | \ \xi=[\xi_1,\cdots,\xi_n]\in \Xi \},
\]
where $p(x;\xi)$ is the distribution function on $\mathcal{A}$ and $\xi\rightarrow p_\xi$ is injective. We say that $\Xi$ is a \emph{parametric space}, and $M$ a \emph{parametric model}.
\end{defin}

\begin{defin}[Statistical manifold]
A chart on $M$ is an application $\varphi:M\rightarrow \R^n$, $\varphi(p_\xi):=\xi$. If $\psi:\Xi\to\psi(\Xi)$ is a $C^\infty$ diffeomorphism, we introduce the new parameters $\eta:=\psi(\xi)$, so that
$M=\{p_{\psi^{-1}(\eta)} \ | \eta\in \psi(\Xi)\}$. This defines an atlas and the structure of a differentiable manifold, called a \emph{statistical manifold.}
\end{defin}
It is natural to consider geometric structures on a statistical manifold. Let us define some of them.
\begin{defin}[Fisher metric]\label{def:FishMet} Let $S$ be a statistical model. We introduce the metric
$$
g_{ij}(\xi) :=\int_{\mathcal{A}} \p_i l_\xi(x) \p_j l_\xi(x) p(x;\xi) dx,
$$
where $l_\xi(x):=\log p(x;\xi)$ and $\p_i=\frac{\p}{\p \xi^i }$, called the Fisher metric.
\end{defin}

It can be shown that this bilinear form is symmetric and positive semi-definite, and positive definite whenever $\{ \p_1l_\xi,\cdots,\p_n l_\xi\}$ are linearly independent.
We shall assume positive definiteness.

The inner product is given   on each fiber $T_\xi M$ by $g_\xi=\braket{\cdot,\cdot}_\xi$, with $g_\xi=g_{ij}(\xi)d\xi^i\otimes d\xi^j$. Precisely, if $X,Y\in T_\xi M$ we define
\[
\langle X, Y\rangle_\xi = g_{ij}d\xi^i\otimes d\xi^j(X^k \p_k, Y^l \p_l) =g_{ij}X^k \delta^i_k Y^l \delta^j_l=\E_\xi[(Xl_\xi)(Y l_\xi)]\ ,
\]
where $Xf:=\mathfrak{L}_X f$ and $\E_\xi$ denotes the mean value with $p_\xi$ as the distribution function.

Clearly, Fisher's metric endows a statistical manifold with the structure of a Riemannian manifold. We can also introduce a notion of duality in the class of metric connections.
\begin{defin}[Dual structure] Let $(M,g)$ be a Riemannian statistical manifold and let $\nabla,\nabla^*$ be affine connections on $M$.
We  say that the connections are dual if
$$\mathfrak{L}_Z\braket{X,Y}=\braket{\nabla_Z X,Y}+\braket{ X,\nabla^*_Z Y}, \qquad \forall X,Y,Z\in \T^1_0(M) \ . $$
In this case we shall say that ($g,\nabla,\nabla^*$) is a \emph{dual structure} on $M$.
\end{defin}
\begin{defin}[Divergence] Let $M$ be a statistical manifold and suppose that we are given a smooth function $D( \cdot|| \cdot) : M \times M \to \mathbb{R}$ satisfying the following properties
$\forall$ $p,q \in M$:
\begin{itemize}
\item[i)] $D(p||q)\ge 0$, and $D(p||q)=0$ iff $p =q$,
\item[ii)] $D[\partial_i\partial_j||\cdot]$ is positive definite, where $D[\partial_i\partial_j||\cdot](p):=\partial_i\partial_j D(p(\xi)||p(\xi'))\vert_{\xi'=\xi}$,
$\partial_i:=\partial/\partial \xi_i$ and $p=p(\xi)$.
\end{itemize}
Then, $D$ is said to be a \emph{divergence.}
 \label{def:Div}
\end{defin}
\begin{nota}
Condition i) guarantees positive semi-definiteness of $D$. If only this condition holds, we will talk about a \emph{semi-divergence}.
In general, a divergence is a pseudo-distance, i.e. it does not in general satisfy the other axioms of distance (although it could be symmetrized).
\end{nota}
Given a divergence, the following geometric structures are defined:
\begin{defin}\label{def:EstDiv} Let $D$ be a divergence on $M$. We define the following functions:
  \begin{equation}  \label{met} \braket{X,Y}^{(D)}:=-D[X||Y]=D[XY||\cdot], \quad  X,Y \in TM \end{equation}
 \begin{equation}  \label{af1} \braket{\nabla^{(1,D)}_X Y, Z}^{(D)}:=-D[XY||Z], \quad  X,Y,Z \in TM \end{equation}
\begin{equation}  \label{af2} \braket{\nabla^{(2,D)}_X Y, Z}^{(D)}:=-D[Z||XY], \quad  X,Y,Z \in TM  \end{equation}
where by $D[X||\cdot](p)$ we mean $X^iD[\p_i||\cdot](p)$
\end{defin}
They have interesting properties:
\begin{teo} Let $D$ be a divergence on $M$. Then, the previous structures represent a metric (eq. \eqref{met}) and  two affine connections (eqs. \eqref{af1}--\eqref{af2}), respectively.
In addition, ($M, g^{D}, \nabla^{(1,D)}, \nabla^{(2,D)}$) is a dual structure on $M$.
\end{teo}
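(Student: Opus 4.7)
The plan is to work in local coordinates on a chart. Set $\hat D(\xi,\xi'):=D(p_\xi||p_{\xi'})$. The two axioms of Definition~\ref{def:Div} translate into (a) $\hat D(\xi,\xi)\equiv 0$ identically, and (b) the diagonal is a minimum in each slot separately, so $\partial_i\hat D|_{\xi'=\xi}=\partial_{i'}\hat D|_{\xi'=\xi}=0$. Differentiating these relations once or twice more along the diagonal produces the key compatibility identities
\[
\partial_i\partial_j\hat D\big|_{\xi'=\xi}=-\partial_i\partial_{j'}\hat D\big|_{\xi'=\xi}=\partial_{i'}\partial_{j'}\hat D\big|_{\xi'=\xi},
\]
together with analogous third-order rules, and these will drive the whole argument. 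For the metric, expanding $D[XY||\cdot]$ in coordinates yields $X^iY^j\partial_i\partial_j\hat D|_{\xi'=\xi}+X^i(\partial_iY^j)\partial_j\hat D|_{\xi'=\xi}$; the second piece vanishes by (b) and the first equals $-X^iY^j\partial_i\partial_{j'}\hat D|_{\xi'=\xi}=-D[X||Y]$, which simultaneously establishes the equality in \eqref{met}, symmetry (Schwarz applied to the mixed partial, combined with the identity above), and $C^\infty$-bilinearity of $\braket{\cdot,\cdot}^{(D)}$. Positive-definiteness is exactly condition~(ii) in Definition~\ref{def:Div}, so $g^{(D)}$ is a Riemannian metric.

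For the connections, non-degeneracy of $g^{(D)}$ implies that \eqref{af1} and \eqref{af2} uniquely determine vector fields $\nabla^{(1,D)}_X Y$ and $\nabla^{(2,D)}_X Y$. A direct expansion of $-D[XY||Z]$ in coordinates, using (b) and the fact that $\partial_j\partial_{k'}\hat D|_{\xi'=\xi}=-g^{(D)}_{jk}$, yields the local form
\[
(\nabla^{(1,D)}_X Y)^m=X^i\partial_iY^m+g^{(D)mk}\Gamma^{(1,D)}_{il,k}\,X^iY^l,\qquad \Gamma^{(1,D)}_{ij,k}:=-\partial_i\partial_j\partial_{k'}\hat D\big|_{\xi'=\xi},
\]
which is manifestly that of an affine connection, making $C^\infty(M)$-linearity in $X$ and the Leibniz rule in $Y$ immediate. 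The analogous computation, with primed and unprimed indices swapped, handles $\nabla^{(2,D)}$, with $\Gamma^{(2,D)}_{ij,k}:=-\partial_{i'}\partial_{j'}\partial_k\hat D|_{\xi'=\xi}$.

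Finally, duality reduces to a single coordinate identity. Differentiating $g^{(D)}_{ij}=-\partial_i\partial_{j'}\hat D|_{\xi'=\xi}$ with respect to $\xi^k$, restricting to the diagonal and applying Schwarz to reorder mixed partials, one obtains
\[
\partial_k g^{(D)}_{ij}=\Gamma^{(1,D)}_{ki,j}+\Gamma^{(2,D)}_{kj,i}.
\]
Multiplying by $Z^kX^iY^j$ and regrouping using the Christoffel formulae above produces exactly $\mathfrak{L}_Z\braket{X,Y}^{(D)}=\braket{\nabla^{(1,D)}_ZX,Y}^{(D)}+\braket{X,\nabla^{(2,D)}_ZY}^{(D)}$, the required duality. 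The only genuine obstacle throughout is notational bookkeeping: the operators $XY$ appearing in the definitions are not a priori tensorial, so a naive reading threatens $C^\infty(M)$-linearity in the arguments. The resolution, uniform across every step, is that the offending lower-order terms always carry single first partials of $\hat D$ evaluated on the diagonal and therefore vanish by (b); once this cancellation is systematically exploited, the whole theorem collapses to the chain rule and Schwarz's theorem applied to $\hat D$ on its diagonal.
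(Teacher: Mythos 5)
The paper itself offers no proof of this theorem: it appears in the review Section 6 as a classical result of information geometry (Eguchi's divergence-induced dualistic structure, cf.\ Amari--Nagaoka), so there is no internal argument to compare against. Your derivation is the standard one and is correct: the vanishing of the first partials of $\hat D$ on the diagonal kills every non-tensorial lower-order term, the identity $\partial_i\partial_j\hat D\vert_{\xi'=\xi}=-\partial_i\partial_{j'}\hat D\vert_{\xi'=\xi}$ gives both the equality in \eqref{met} and the symmetry of $g^{(D)}$ while positive-definiteness is exactly axiom (ii) of Definition \ref{def:Div}; the coefficients $\Gamma^{(1,D)}_{ij,k}=-\partial_i\partial_j\partial_{k'}\hat D\vert_{\xi'=\xi}$ and $\Gamma^{(2,D)}_{ij,k}=-\partial_{i'}\partial_{j'}\partial_k\hat D\vert_{\xi'=\xi}$ determine affine connections through the non-degeneracy of $g^{(D)}$; and differentiating $g^{(D)}_{ij}=-\partial_i\partial_{j'}\hat D\vert_{\xi'=\xi}$ along the diagonal yields $\partial_k g^{(D)}_{ij}=\Gamma^{(1,D)}_{ki,j}+\Gamma^{(2,D)}_{kj,i}$, which is precisely the duality relation.
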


\section{Divergence Functions from group entropies}\label{sec:DivFun}
The aim of this section is to give an information-theoretical interpretation of the previous results concerning entropic functions and  group entropies. More precisely, we shall prove that the composition procedure
allowing us to construct new group entropies from old ones, also permits to generate in a similar way new families of divergence functions from old ones. Their geometric properties will be studied in detail below.

\vspace{1mm}

First, it is straightforward to show the following preliminary result, which allows us to construct more general
semi-divergences using previously defined ones.
\begin{prop}\label{prop:zetaD} Let $\{D_i\}_{i=1}^m$ be a family of semi-divergences. Let $\zeta:\Omega\subset \R^m\to\mathbb{R}$, with $\R_{>0}^n\cup\{\boldsymbol{0}\}\subset \Omega$ be a function 
such that
$$\zeta\vert_{\R_{>0}^n\cup\{\boldsymbol{0}\}}:\R_{>0}^n\cup\{\boldsymbol{0}\}\to\R_{\ge 0} \hspace{2mm}\text{and} \hspace{2mm}\zeta\vert_{\R_{>0}^n\cup\{\boldsymbol{0}\}}(\textbf{x})=0
\Longleftrightarrow \textbf{x}=0 \ . $$
\noindent Then the function
\begin{equation} \label{eq:Dsemid}
D(p||q):=\zeta(D_i(p||q))
\end{equation}
is a semi-divergence.
\end{prop}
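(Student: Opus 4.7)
The plan is to verify directly the two defining conditions of a semi-divergence, namely pointwise non-negativity and the $p=q$ zero-detection property. The work amounts to careful bookkeeping, together with one essential observation about the joint behaviour of the family $\{D_i\}$.

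First I would make the simple but crucial remark that, for any $p,q\in M$, either all of the $D_i(p\Vert q)$ vanish simultaneously or every single one is strictly positive: this is forced by the assumption that each $D_i$ is itself a semi-divergence, so $D_i(p\Vert q)=0$ exactly when $p=q$, independently of $i$. Consequently the vector
\[
\mathbf{v}(p,q):=\bigl(D_1(p\Vert q),\ldots,D_m(p\Vert q)\bigr)
\]
never lies on the ``mixed'' part of the boundary where some coordinates vanish and others do not; it always belongs to $\R_{>0}^{m}\cup\{\mathbf{0}\}$, which is precisely the subset of the domain on which the hypothesis on $\zeta$ has been placed.

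With this reduction in hand the verification is immediate. Non-negativity of $D$ follows because $\zeta$ restricted to $\R_{>0}^{m}\cup\{\mathbf{0}\}$ takes values in $\R_{\ge 0}$, so $D(p\Vert q)=\zeta(\mathbf{v}(p,q))\ge 0$. For the zero-condition I would then chain the equivalences
\[
D(p\Vert q)=0\ \Longleftrightarrow\ \zeta(\mathbf{v}(p,q))=0\ \Longleftrightarrow\ \mathbf{v}(p,q)=\mathbf{0}\ \Longleftrightarrow\ D_i(p\Vert q)=0\ \forall\, i\ \Longleftrightarrow\ p=q,
\]
where the second equivalence uses the stipulation that $\zeta$ vanishes only at $\mathbf{0}$ on the distinguished domain, and the last one invokes the semi-divergence property of any individual $D_i$. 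This is exactly condition i) in Definition \ref{def:Div}, so $D$ is a semi-divergence.

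There is essentially no hard step; the only subtle point — and what I would emphasise in the write-up — is that although the hypothesis on $\zeta$ is posed on the seemingly restrictive set $\R_{>0}^{m}\cup\{\mathbf{0}\}$ rather than on the full orthant $\R_{\ge 0}^{m}$, this restricted domain is exactly tailored to the range of $\mathbf{v}(\cdot,\cdot)$ produced by a family of semi-divergences, thanks to the dichotomy established in the first step.
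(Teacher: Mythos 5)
Your proof is correct, and it supplies exactly the argument the paper leaves implicit (the paper states the proposition as ``straightforward to show'' and gives no proof). The key point you isolate --- that because each $D_i$ is a semi-divergence, the vector $\bigl(D_1(p\Vert q),\ldots,D_m(p\Vert q)\bigr)$ always lands in $\R_{>0}^{m}\cup\{\boldsymbol{0}\}$ and never on the mixed boundary, which is precisely why the hypothesis on $\zeta$ is posed on that set rather than on all of $\R_{\ge 0}^{m}$ --- is the right observation and the only non-trivial content of the statement.
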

\begin{ex}[Taylor polynomial for divergences]\label{ex:taylorD} If $D_j$ is a semi-divergence, we can define the following multi-parametric semi-divergence:
$$D(p||q)=\sum_{i=1}^n\sum_{\sum i_j=i}\alpha^i_{i_1...i_n}\prod_{j=1}^m(D_j(p||q))^{i_j},$$
where $\alpha^i_{i_1...i_n}\ge 0$ (and at least one different from 0) and we have used Proposition \ref{prop:zetaD}.
\end{ex}
We study now the  geometry associated with the semi-divergence \eqref{eq:Dsemid} constructed above.

\begin{teo}\label{Teo:GeoZeta} Let $D(p||q)=\zeta(D_i(p||q))$ be a semi-divergence, where  $\{D_k\}_{k=1}^m$ is a family of divergences, and the function  $\zeta$, defined as in Proposition 
\ref{prop:zetaD}, is of class $C^{3}(\Omega)$. Then, $D$ is a divergence if and only if  $\boldsymbol{0}$ is not a critical point for $\zeta$. Furthermore,
\begin{itemize}
\item[i)] $g_{ij}=\sum_{k=1}^m\tilde{\partial_k} \zeta(\boldsymbol{0})g_{ij}^{(k)}$,
\item[ii)] $\Gamma_{ij,k}=\sum_{k=1}^m\tilde{\partial_l} \zeta(\boldsymbol{0})\Gamma_{ij,k}^{(l)}$,
\item[iii)]$\Gamma_{ij,k}^*=\sum_{k=1}^m\tilde{\partial_l} \zeta(\boldsymbol{0})\Gamma_{ij,k}^{(l)*}$ \ .
\end{itemize}
Here $\tilde{\partial_k}f(x):=(\partial f)/(\partial x^k)$, whereas $\partial_i$ is the partial derivative with respect to the parameters representing the coordinates of the statistical manifold.
\end{teo}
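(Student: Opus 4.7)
The plan is to exploit two facts that hold at the diagonal $\xi'=\xi$ for any divergence: for each component $D_k$, writing $F_k(\xi,\xi'):=D_k(p(\xi)\Vert p(\xi'))$, one has $F_k=0$ on the diagonal (from property i) and, because $F_k$ attains its minimum there in each slot separately, every first partial derivative $\partial_i F_k$ and $\partial_{i'} F_k$ vanishes on the diagonal. This is the workhorse of the proof: once first-order Taylor terms die, chain-rule expansions of $\zeta(F_1,\dots,F_m)$ collapse dramatically.

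I would first record the chain-rule expansion of $F(\xi,\xi'):=\zeta(F_1,\dots,F_m)$ up to third order. Differentiating twice, one gets two types of terms: those proportional to $\tilde\partial_k \zeta$ times $\partial_i\partial_j F_k$, and those proportional to $\tilde\partial_k\tilde\partial_l \zeta$ times $\partial_i F_k\,\partial_j F_l$. The latter vanish identically at $\xi'=\xi$ by the first-order vanishing noted above, so only the former survive. Evaluating $\zeta$ and its derivatives at $F_1=\dots=F_m=0$, this yields
\begin{equation*}
g_{ij}=\sum_{k=1}^m \tilde\partial_k\zeta(\boldsymbol{0})\,g^{(k)}_{ij},
\end{equation*}
proving (i). For (ii) and (iii) one iterates the same expansion one more time, mixing a derivative in $\xi^{k}$ (or $\xi^{k\prime}$) into the second-derivative formula. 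Every newly generated term except $\sum_l \tilde\partial_l\zeta(\boldsymbol 0)\,\partial_i\partial_j\partial_{k'}F_l$ (respectively $\partial_{i'}\partial_{j'}\partial_k F_l$) contains at least one first-partial of some $F_l$ or a $\partial_{k'}F_l$ on the diagonal and therefore drops out. Combining with Definition \ref{def:EstDiv} and the identification $\Gamma_{ij,k}^{(l)}=-\partial_i\partial_j\partial_{k'}F_l|_{\xi'=\xi}$ (and its dual) delivers the two stated formulas.

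Finally, to establish the \emph{iff} in the first sentence, one observes that Proposition \ref{prop:zetaD} already guarantees that $D$ is a semi-divergence, so the only missing ingredient for a genuine divergence is positive-definiteness of the Hessian $g_{ij}$ obtained above. Since $\zeta$ maps $\mathbb{R}^n_{>0}\cup\{\boldsymbol{0}\}$ into $\mathbb{R}_{\geq 0}$ with a strict zero only at $\boldsymbol 0$, the point $\boldsymbol 0$ is a boundary minimum of $\zeta$ on the non-negative orthant, so every $\tilde\partial_l\zeta(\boldsymbol 0)$ must be non-negative. Each $g^{(l)}_{ij}$ is positive-definite by the divergence assumption on $D_l$, hence the convex combination $\sum_l \tilde\partial_l\zeta(\boldsymbol 0)\,g^{(l)}_{ij}$ is positive-definite precisely when at least one coefficient is non-zero, i.e.\ when $\boldsymbol 0$ is not a critical point of $\zeta$.

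The main obstacle I anticipate is strictly bookkeeping: carefully enumerating the seven or eight terms produced by applying the chain rule a third time and verifying that all of them except the desired one carry a factor that is killed on the diagonal. A secondary delicate point is justifying that the non-negativity of $\tilde\partial_l\zeta(\boldsymbol 0)$ does follow (rather than just $\geq 0$ being assumed), which relies on $\boldsymbol 0$ being a minimum of $\zeta$ over $\mathbb{R}^n_{>0}\cup\{\boldsymbol 0\}$ together with $\zeta\in C^3$; this is the subtle analytic step underpinning the \emph{iff}.
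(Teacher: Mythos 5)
Your proposal is correct and follows essentially the same route as the paper's proof: the chain-rule expansion with first-order terms killed on the diagonal gives (i), one further $\xi'$-derivative gives (ii)--(iii), and the \emph{iff} comes from showing $\tilde{\partial_l}\zeta(\boldsymbol{0})\ge 0$ (the paper makes this precise via one-sided directional derivatives along directions in $\R^n_{>0}$ and a limiting argument) together with non-criticality forcing at least one coefficient to be strictly positive.
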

\begin{proof}
Let us define $\boldsymbol{D}(\xi||\xi'):=(D_1(p_\xi||p_{\xi'}),...,D_m(p_\xi||p_{\xi'}))$. Using the chain rule, we have
$$\partial_i \zeta(\boldsymbol{D}(\xi||\xi'))=\sum_k\tilde{\partial_k}\zeta(\boldsymbol{D}(\xi||\xi')) \partial_i D_k(\xi||\xi'). $$
Then
\begin{equation}\label{eq:FunDivMet}
\p_j\p_i( \zeta(\boldsymbol{D}(\xi||\xi'))) = \sum_{k,l}\tilde{\partial_{lk}^2}\zeta(\boldsymbol{D}(\xi||\xi'))\partial_j D_k \partial_i D_k+\sum_k \tilde{\partial_{k}}\zeta(\boldsymbol{D}(\xi||\xi'))\partial_{ji}^2 D_k.
\end{equation}
Now, let us evaluate the previous expression at $\xi=\xi'$. By Definition \ref{def:Div}, it is clear that $(\partial_i D_k)\vert_{\xi'=\xi}=\boldsymbol{0}$ $\foralll k\in\{1,...,m\}$. 
Consequently, the first term vanishes. For the second term, take into account that $D_k(\xi||\xi)=0$ by definition, so it converts into
$\sum_k \tilde{\partial_{k}}\zeta(\boldsymbol{0})\partial_{ji}^2 D_k\vert_{\xi'=\xi} $.
Then, using the definition of $g_{ij}$ for each divergence (see Definition \ref{def:EstDiv}), the result i) follows. \\
Let us prove that the metric $g$ is positive definite. Observe that by Proposition \ref{prop:zetaD}, the restricted function $\zeta^R:=\zeta\vert_{\R_{>0}^n\cup\{\boldsymbol{0}\}}$ is such that 
$\zeta^R(\boldsymbol{0})=0$ and $\zeta^R(\boldsymbol{x\neq 0})>0$. So, consider the directional derivatives along $e\in\R^n_{>0}$. As $\zeta$ is $C^1$, 
the right derivative ($t>0$, $\zeta^R(\boldsymbol{x\neq 0})-\zeta^R(\boldsymbol{0})>0$), non-negative)   equals the derivative, i.e.,
$$\frac{d^+}{dt}\bigg\vert_{t=0}\zeta(e t)=\frac{d}{dt}\bigg\vert_{t=0}\zeta(e t)=\sum_i\pt_i\zeta(\boldsymbol{0}) e^i\geq 0. $$
Since by hypothesis $\boldsymbol{0}$ is not a critical point for $\zeta$, then $\existss \pt_i\zeta(\boldsymbol{0})$ not vanishing. 
If $\existss \mathcal{I}$ such that $ \pt_i\zeta(\boldsymbol{0})<0$ when $i\in\mathcal{I}$, 
define $e_l\in\R^+_{>0}$ such that $e^i_l:=l$ if $i\in\mathcal{I}$ and $e^i_l:=l^{-1}$ if $i\notin\mathcal{I}$. Then
$$\lim_{l\to\infty}\sum_i\pt_i\zeta(\boldsymbol{0}) e^i=-\infty, $$
thus $\existss l\in\mathbb{N}$ such that $\sum_i\pt_i\zeta(\boldsymbol{0})e^i<0$, but it must be non-negative. 
Therefore, $\mathcal{I}=\emptyset$, i.e., $\pt_i\zeta(\boldsymbol{0})\geq 0$ and at least one of these derivatives is positive. Then
$$\braket{v,v}_{g}:=\sum_{ij}g_{ij}v^iv^j=\sum_{k}\pt_k\zeta(\boldsymbol{0})\braket{v,v}_{g_k}\geq 0, $$
because by assumptions $g_k$, the metrics associated with the divergences $D_{k}$ are positive definite  and the coefficients are non-negative. Equality holds if and only if
 $\pt_k\zeta(\boldsymbol{0})\braket{v,v}_{g_k}=0$ $\foralll k$, but let $k'$ such that $\pt_{k'}\zeta(\boldsymbol{0})>0$ (we have shown above that such $k'$ exists), so $\braket{v,v}_{g_{k'}}=0$. 
By positive definiteness of $g_{k'}$, $v=\textbf{0}$; thus we conclude that $g$ is positive definite.\\
Regarding the connection symbols, just note that if we differentiate eq. \eqref{eq:FunDivMet} with respect to $\xi'_k$, any summand which includes one partial derivative of $D$ will vanish, so the only term will be (after evaluating at $\xi=\xi'$):
$$\sum_l \tilde{\partial_{l}}\zeta(\boldsymbol{0})\partial_{ji}^2\p'_k D_l\vert_{\xi'=\xi}, $$
and then, the result for $\Gamma_{ij,k}$ follows. A similar derivation can be done for $\Gamma_{ij,k}^*$.
\end{proof} 
\begin{nota} Notice that if we differentiated again, that derivative would \emph{not} be a linear combination of the derivatives of $D_k$ (as it is for the second and third derivative, i.e., 
for $g$ and $\Gamma$), new non-linear terms would arise.
\end{nota}

\begin{cor}\label{cor:GeoZeta} 
The new connection is given by:
\begin{equation}
\Gamma^m_{ij}(\xi)=\sum_{l,n}A^m_{l,n}(\xi)\Gamma^{n(l)}_{ij}(\xi)\ , 
\end{equation}
where $A^m_{l,n}(\xi):=\sum_{k,n}\p_l\zeta(\textbf{0})g^{mk}(\xi)g_{kn}^{(l)}(\xi)$.
\end{cor}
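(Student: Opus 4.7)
The plan is to obtain the expression for $\Gamma^m_{ij}$ by simply raising the last index on the Christoffel symbol $\Gamma_{ij,k}$ given by Theorem \ref{Teo:GeoZeta}, using the inverse of the combined metric $g$. Concretely, I would begin from the standard identity
\[
\Gamma^m_{ij}(\xi)=\sum_{k} g^{mk}(\xi)\,\Gamma_{ij,k}(\xi),
\]
which is valid once we know that $g$ is positive definite (this was established in Theorem \ref{Teo:GeoZeta}, so $g^{mk}$ is well-defined). This isolates where the new connection enters and reduces the problem to a bookkeeping exercise.

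Next, I would substitute the expression for $\Gamma_{ij,k}$ from item (ii) of Theorem \ref{Teo:GeoZeta}, namely
\[
\Gamma_{ij,k}(\xi)=\sum_{l}\tilde{\partial_l}\zeta(\boldsymbol{0})\,\Gamma_{ij,k}^{(l)}(\xi),
\]
and then convert each $\Gamma_{ij,k}^{(l)}$ back to the raised-index form associated with its own divergence $D_l$: namely $\Gamma_{ij,k}^{(l)}=\sum_{n} g_{kn}^{(l)}\,\Gamma_{ij}^{n(l)}$. Substituting and exchanging the order of summation gives
\[
\Gamma^m_{ij}(\xi)=\sum_{l,n}\!\left(\sum_{k}\tilde{\partial_l}\zeta(\boldsymbol{0})\,g^{mk}(\xi)\,g_{kn}^{(l)}(\xi)\right)\Gamma^{n(l)}_{ij}(\xi),
\]
which is exactly the claimed formula once the inner parenthesis is identified with $A^m_{l,n}(\xi)$.

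Since each ingredient (the raised/lowered index relations, the Theorem \ref{Teo:GeoZeta} formulas) is already justified earlier in the paper, no step is truly difficult. The only subtlety is checking that $g^{mk}$ is available, which requires positive definiteness of $g$; this was proved in Theorem \ref{Teo:GeoZeta} under the assumption that $\boldsymbol{0}$ is not a critical point of $\zeta$, and I would cite that fact explicitly. Beyond that, the proof is a one-line algebraic manipulation and the corollary should follow immediately.
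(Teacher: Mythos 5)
Your proposal is correct and follows essentially the same route as the paper: both arguments amount to writing $\Gamma_{ij,k}=\sum_l\tilde{\partial_l}\zeta(\boldsymbol{0})\,g^{(l)}_{kn}\Gamma^{n(l)}_{ij}$ via the lowered-index relations of Definition \ref{def:EstDiv} and Theorem \ref{Teo:GeoZeta}, and then contracting with $g^{km}$, whose existence is guaranteed by the positive definiteness of $g$ established in that theorem. The only cosmetic difference is the order of operations (you raise the index first and then substitute, the paper substitutes first and then raises), which changes nothing of substance.
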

\begin{proof}
Indeed, by Definition \ref{def:EstDiv} 
\[\braket{\nabla_{\p_i}\p_j,\p_k}^D=\Gamma_{ij,k} \quad\text{and}\quad \braket{\nabla_{\p_i}\p_j,\p_k}^D=\braket{\Gamma_{ij}^l\p_l,\p_k}^D \ ,
\] 
so we have
\begin{equation} \label{Gammalij}
\Gamma^l_{ij}g_{lk}=\sum_l\partial_l\zeta(\textbf{0})g^{(l)}_{kn}\Gamma^{n(l)}_{ij}, 
\end{equation}
and then it easily follows if we multiply the previous expression \eqref{Gammalij} by $g^{km}$, the inverse of $g$, and sum it over $k$. Note that $g=\sum_l\pt_l\zeta(\boldsymbol{0})g^{(l)}$ is invertible because it is positive definite.

\end{proof}
\begin{ex} Consider the semi-divergence of Example \ref{ex:taylorD}. If the hypotheses of Theorem \ref{Teo:GeoZeta} are satisfied, the associated new metric will depend linearly on the previous metrics:
$g_{ij}=\sum_{k=1}^m\alpha^1_{k}g_{ij}^k $.
\end{ex}

\section{The $D_{{h,f}}$ divergence and its Geometry}\label{sec:Dhf}
In this Section, we will discuss the information geometry related to the class of $(h,f)$-entropies. The results obtained here can be seen as a particular case of the previous general approach; 
however, we will discuss them explicitly due to the considerable relevance of this class in many applicative contexts.
   
We will show that the SM divergence belongs to a more general family of divergences that will be defined below.
\begin{defin} Let $f, h$ be functions such that $f:\R_{\geq 0}\to \mathbb{R}$ is continuous and strictly concave (resp. convex), $h$ is continuous and strictly decreasing 
(resp. increasing) with $h(f(1))=0$. Let $\mathcal{P}^{+}_W$ denote the space of probability distributions $(p_0,p_1,\ldots,p_n)$, with $p_i>0$ $\forall i$.
For $p,q\in \mathcal{P}^{+}_W$ we introduce the $(h,f)$-divergence
\begin{equation} \label{eq:hfdiv}
D_{h,f}(p||q):=h\left(\sum_{i=0}^W q_if\left(\frac{p_i}{q_i}\right)\right) \ .
\end{equation}
\label{def:DgfG}
\end{defin}
An analogous expression holds for the continuous case.
The latter divergence can be considered as a generalization of the standard $f$-divergence \cite{AmaIG}. Apart the Kullback-Leibler divergence, the R\'enyi's and $\alpha$-divergences, another
interesting particular case in the class  \eqref{eq:hfdiv} is given, for instance, by the Sharma-Mittal divergence, whose geometry has been recently investigated by Nielsen et al. in \cite{Nielsen}. The SM-divergence is defined for $p,q\in\mathcal{P}_W$ by
\begin{equation}\label{form:Dab}
D_{\alpha\beta}(p||q):=\frac{\left(\sum _{i=0}^W p_i^{\alpha } q_i^{1-\alpha }\right){}^{\frac{1-\beta }{1-\alpha }}-1}{\beta -1} \ .
\end{equation}
\begin{teo} \label{Teo:Dsigma}Let $D_{h,f}(p||q)$ be the generalized divergence \eqref{eq:hfdiv}, where $f,h$ are twice differentiable functions, with $f$ strictly concave (resp. convex) 
and $h$ strictly increasing 
(resp. decreasing).
Then, the associated metric tensor  is given by\footnote{In accordance with Chentsov's theorem, \cite{AmaMe}.}
\begin{itemize}
\item[i)] Discrete case, for $p\in\emph{int}(\mathcal{P}_W)$

$g_{ij}=\left(\dfrac{\delta_{ij}}{p_j}+\dfrac{1}{p_0}\right)h '(f(1)) f''(1).$
\item [ii)] Continuous case,

$g_{ij}=h '(f(1)) f''(1)g^F_{ij},$

where $g_{ij}^F$ is the Fisher metric.
\end{itemize}
\end{teo}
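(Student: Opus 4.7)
The plan is to reduce the computation of $g_{ij}$ to a chain-rule calculation on the scalar $S(p,q):=\sum_{i=0}^{W} q_i\, f(p_i/q_i)$, so that $D_{h,f}(p||q)=h(S(p,q))$. Differentiating twice in the first argument gives
\begin{equation*}
\partial_i\partial_j D_{h,f} = h''(S)\,(\partial_i S)(\partial_j S) + h'(S)\,\partial_i\partial_j S .
\end{equation*}
At $p=q$ one has $S|_{p=q}=f(1)$ (using $\sum_i q_i=1$) and, as a short computation shows, $\partial_i S|_{p=q}=0$. Hence the first summand drops out and $g_{ij}=h'(f(1))\,\partial_i\partial_j S|_{p=q}$.

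For part (i), I would use $(p_1,\dots,p_W)$ as independent coordinates on $\mathcal{P}^+_W$, with the constraint $p_0=1-\sum_{k\geq1}p_k$, so that $\partial_i p_0=-1$. Direct differentiation yields $\partial_i S = f'(p_i/q_i) - f'(p_0/q_0)$. Differentiating once more and setting $p=q$ (so that each $p_k/q_k=1$) produces a contribution $f''(1)/p_i$ from the $f'(p_i/q_i)$ piece only in the diagonal case $i=j$, while the $f'(p_0/q_0)$ piece contributes $f''(1)/p_0$ for every pair $(i,j)$. This assembles into $\partial_i\partial_j S|_{p=q}=f''(1)\bigl(\delta_{ij}/p_j + 1/p_0\bigr)$, and multiplication by $h'(f(1))$ delivers formula (i).

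For part (ii) the same scheme works verbatim with sums replaced by integrals, exploiting $\partial_i p=p\,\partial_i l$ where $l:=\log p$. The second derivative of $S$ splits into a piece involving $\partial_i\partial_j p$, which integrates to zero because $\int p\,dx=1$, plus a piece $f''(1)\int (\partial_i p)(\partial_j p)/p\,dx = f''(1)\int p\,(\partial_i l)(\partial_j l)\,dx = f''(1)\,g^F_{ij}$ by Definition~\ref{def:FishMet}. Multiplying by $h'(f(1))$ then yields (ii).

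The main obstacle is not conceptual but bookkeeping in the discrete case: one must remember that $p_0$ is a dependent variable, so the chain rule applied through $f'(p_0/q_0)$ contributes the $1/p_0$ summand for \emph{every} pair $(i,j)$, not only on the diagonal. Once this is tracked correctly, the sign conventions of the hypotheses ($f$ strictly concave with $h$ strictly increasing, or $f$ strictly convex with $h$ strictly decreasing) guarantee $h'(f(1))\,f''(1)>0$, so the resulting metric is a strictly positive multiple of the Fisher metric on the simplex (respectively of $g^F_{ij}$ in the continuous case), in accordance with the footnoted Chentsov theorem.
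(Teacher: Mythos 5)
Your proposal is correct and follows essentially the same route as the paper: both compute $\partial_i\partial_j h(S)$ with $S(p,q)=\sum_k q_k f(p_k/q_k)$ on the diagonal $p=q$, use $\partial_i S|_{p=q}=0$ to discard the $h''$ term, and reduce the surviving term to $h'(f(1))f''(1)$ times the (discrete or continuous) Fisher metric — your write-up merely makes explicit the chain-rule bookkeeping, including the dependent variable $p_0$, that the paper compresses into ``a direct calculation shows''. One caveat: your closing claim that the stated hypotheses force $h'(f(1))\,f''(1)>0$ relies on the pairing ``concave $f$ with increasing $h$'', under which the product would in fact be $\le 0$; the sign (and the nonnegativity of $D_{h,f}$ itself via Jensen) works out only with the pairing of Definition~\ref{def:DgfG} (concave $f$ with decreasing $h$, convex $f$ with increasing $h$), which is evidently what the theorem intends despite the typo in its statement.
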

\begin{proof} We have
$$\sum_i f(p_i/q_i)q_i \leq f\left(\sum_i p_i \frac{q_i}{q_i}\right)=f(1) $$
by Jensen's inequality. Then,
$$h\left(\sum_i f(p_i/q_i)q_i\right)\geq h(f(1))=0. $$
As $f$ is strictly concave, equality holds iff $p_i/q_i=p_j/q_j\equiv c$ $\foralll i,j$, but as $\sum_ip_i=\sum_iq_i=1$, then, $c=1$, i.e., $p_i=q_i$ $\foralll i$. 
If $f$ is strictly convex and $h$ is decreasing, the last inequality remains unchanged.
A direct calculation shows
$$D_{h,f}[\p^2_{ij}||\cdot]=f''(1) h '(f(1)) \int_\Omega \frac{\p_ip(x,\xi)\p_j p(x,\xi)}{p(x)} \, dx. $$
Now, using $\p_ip(\xi,x)=\p_i\log p(\xi,x)p(\xi,x)$ the result follows, see Definition \ref{def:FishMet}. In the discrete case, the calculation is straightforward, 
now take into account that $p_0=1-\sum _{i=1}^W p(i)$ and general conditions for probability distributions. The result is the one stated above, discrete Fisher metric. 
To prove it is  positive definite, note that
$$g_{ij}v^iv^j=\sum_{i=1}^W \dfrac{(v^i)^2}{p_i}+\dfrac{(\sum_{i=1}^W v_i)^2}{p_0}\geq 0 .$$
And if equality holds, $(v^i)^2=0$ $\forall i$, so it is positive definite.
\end{proof}

The dual connections associated with the $(h,f)$-divergences are determined in the following
\begin{teo} \label{Teo:DsigmaCris}Let $D_{h,f}(p||q)$ be the generalized divergence \eqref{eq:hfdiv}, where $f,h$ are three times differentiable functions. 
Then, its associated dual connections (Definition \ref{def:EstDiv}) are given by:
$$\Gamma_{ij,k}^{(h,f)}=c\Gamma_{ij,k}^{(-\alpha)}, \quad \Gamma_{ij,k}^{(h,f) *}=c\Gamma_{ij,k}^{(\alpha)}, $$
where $c:=h '(f(1)) f''(1)$, $\alpha=(2 f'''(1)+3 f''(1))/f''(1)$ and:
{\small $$\Gamma^{(\alpha)}_{ij,k}(\xi):= \E_\xi\left[ \left(\p_i\p_j l_\xi+\frac{1-\alpha}{2}\p_i l_\xi \p_j l_\xi\right)(\p_k l_\xi)\right].$$}
This implies: $\nabla^{(h,f)}=\nabla^{(-\alpha)}$, $\nabla^{(h,f)*}=\nabla^{(\alpha)}$ (the $\alpha$-connection, \cite{AmaMe})
\end{teo}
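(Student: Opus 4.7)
The plan is to reduce everything to a chain-rule computation that funnels the $h$-dependence into a single harmless prefactor, and then carry out the classical $f$-divergence connection calculation.

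First, I would write $D_{h,f}(p_\xi\|p_{\xi'}) = h(F(\xi,\xi'))$, where $F(\xi,\xi'):=\sum_i p_{\xi',i}\,f(p_{\xi,i}/p_{\xi',i})$. As already noted in Theorem~\ref{Teo:Dsigma}, $F|_{\xi'=\xi} = f(1)$ is constant, and one checks directly (differentiating $p_{\xi,i}/p_{\xi',i}$ in each variable and using $\sum_i \partial_j p_{\xi,i}=0$) that $\partial_i F\vert_{\xi'=\xi}=\partial'_k F\vert_{\xi'=\xi}=0$. Applying the chain rule to the triple derivative
\[
\partial_i\partial_j\partial'_k\bigl(h(F)\bigr)= h'''(F)\,F_iF_jF'_k + h''(F)\bigl[F_{ij}F'_k+F_iF'_{jk}+F_jF'_{ik}\bigr]+ h'(F)\,F_{ij k'}
\]
and evaluating at $\xi'=\xi$ kills every term containing a bare $\partial_i F$ or $\partial'_k F$, leaving only $h'(f(1))\,\partial_i\partial_j\partial'_k F\vert_{\xi'=\xi}$. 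Thus the problem collapses to the $f$-divergence case with the overall scalar $h'(f(1))$.

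Second, I would compute $\partial_i\partial_j\partial'_k F\vert_{\xi'=\xi}$ in closed form. Setting $g(r):=f(r)-r f'(r)$ (so $g'(1)=-f''(1)$, $g''(1)=-f''(1)-f'''(1)$), the identity $\partial'_k F = \sum_n g(p_{\xi,n}/p_{\xi',n})\,\partial'_k p_{\xi',n}$ makes the subsequent $\xi$-differentiations transparent. Using $\partial_i p_n = p_n\,\partial_i l$, $\partial_i\partial_j p_n = p_n(\partial_i l\,\partial_j l + \partial_i\partial_j l)$, I expect to arrive at
\[
-\,\partial_i\partial_j\partial'_k F\bigl\vert_{\xi'=\xi} = f''(1)\,\E_\xi[\partial_i\partial_j l\,\partial_k l] + \bigl(2f''(1)+f'''(1)\bigr)\,\E_\xi[\partial_i l\,\partial_j l\,\partial_k l].
\]
Comparing with the $\alpha$-connection $\Gamma^{(\alpha)}_{ij,k}=\E_\xi[(\partial_i\partial_j l+\tfrac{1-\alpha}{2}\partial_i l\,\partial_j l)\partial_k l]$, matching the coefficient of the cubic term forces $\tfrac{1+\alpha}{2}=2+f'''(1)/f''(1)$, i.e.\ $\alpha=3+2f'''(1)/f''(1)=(2f'''(1)+3f''(1))/f''(1)$, and then the quadratic coefficient is automatically $f''(1)$. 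Combining with the $h'(f(1))$ prefactor from the first step gives $\Gamma^{(h,f)}_{ij,k}=-D_{h,f}[\partial_i\partial_j\|\partial_k]= c\,\Gamma^{(-\alpha)}_{ij,k}$ with $c=h'(f(1))f''(1)$.

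For the dual connection $\Gamma^{(h,f)*}_{ij,k}=-D_{h,f}[\partial_k\|\partial_i\partial_j]=-\partial_k\partial'_i\partial'_j D_{h,f}\vert_{\xi'=\xi}$, I would run the same reduction: chain rule again isolates the $h'(f(1))$ factor times $\partial_k\partial'_i\partial'_j F\vert_{\xi'=\xi}$. Rather than redo the raw expansion, I would use the algebraic relation obtained from differentiating the identity $F(\xi,\xi)\equiv f(1)$ three times, namely $F_{30}+3F_{21}+3F_{12}+F_{03}=0$ at $\xi'=\xi$, together with Amari's duality $\Gamma^{(1,D)}_{ij,k}+\Gamma^{(2,D)}_{ij,k}=\partial_k g^{(D)}_{ij}$, which for $\alpha$-connections reads $\Gamma^{(-\alpha)}+\Gamma^{(\alpha)}=2\Gamma^{(0)}$. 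This forces $\Gamma^{(h,f)*}_{ij,k}=c\,\Gamma^{(\alpha)}_{ij,k}$ with the same $\alpha$, and identifies the connections themselves (not just components) as $\nabla^{(h,f)}=\nabla^{(-\alpha)}$, $\nabla^{(h,f)*}=\nabla^{(\alpha)}$.

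The main obstacle I anticipate is purely bookkeeping: keeping track of all nine terms generated by the triple-product chain rule for $\partial_i\partial_j\partial'_k F$ and correctly converting $\sum_n \partial_i\partial_j p_n \cdot\partial_k l_n$ into the two score-function moments $\E[\partial_i l\,\partial_j l\,\partial_k l]$ and $\E[\partial_i\partial_j l\,\partial_k l]$ without sign errors. Once that cubic-and-quadratic decomposition is in hand, matching with the $\alpha$-connection template is immediate and yields precisely the stated $\alpha$.
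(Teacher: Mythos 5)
Your proposal is correct and, for the primal connection, follows essentially the same route as the paper: what you spell out in your first two steps is exactly the content of the paper's ``direct, but lengthy calculation,'' which lands on the same intermediate expression $c\,\E_\xi\big[\big(\p_i\p_j l_\xi+(2+f'''(1)/f''(1))\,\p_i l_\xi\p_j l_\xi\big)\p_k l_\xi\big]$ and then matches the $\alpha$-connection template just as you do. Your organizing device $\p'_k F=\sum_n g(r_n)\,\p'_k p_{\xi',n}$ with $g(r)=f(r)-rf'(r)$, $g'(1)=-f''(1)$, $g''(1)=-f''(1)-f'''(1)$ is a genuine improvement in bookkeeping: I checked that it reproduces $-\p_i\p_j\p'_kF\vert_{\xi'=\xi}=f''(1)\,\E_\xi[\p_i\p_j l_\xi\,\p_k l_\xi]+(2f''(1)+f'''(1))\,\E_\xi[\p_i l_\xi\p_j l_\xi\p_k l_\xi]$, and the coefficient matching then gives $\alpha=3+2f'''(1)/f''(1)$ exactly as stated. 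The one place you genuinely diverge is the dual connection: the paper simply says ``a similar calculation can be repeated,'' i.e.\ it redoes the triple derivative with the roles of $\xi$ and $\xi'$ exchanged, whereas you invoke the duality of $(g^{(D)},\nabla^{(1,D)},\nabla^{(2,D)})$ (stated as a theorem in the paper) together with the known mutual duality of $\nabla^{(\pm\alpha)}$ with respect to the Fisher metric; since $g=c\,g^F$ with $c$ constant and $\Gamma^{(h,f)}_{ij,k}=c\,\Gamma^{(-\alpha)}_{ij,k}$, uniqueness of the dual connection forces $\Gamma^{(h,f)*}_{ij,k}=c\,\Gamma^{(\alpha)}_{ij,k}$. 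That shortcut is legitimate and arguably cleaner than repeating the computation, but you should state the duality relation with the correct index placement, $\p_k g_{ij}=\Gamma_{ki,j}+\Gamma^{*}_{kj,i}$, rather than $\Gamma_{ij,k}+\Gamma^{*}_{ij,k}=\p_k g_{ij}$ as written; the symmetrized identity $F_{30}+3F_{21}+3F_{12}+F_{03}=0$ alone would not suffice to isolate $\Gamma^{*}$ without it.
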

\begin{proof}
A direct, but lengthy calculation shows 
\begin{align*}
&\Gamma_{ij,k}^{(h,f)}=h'(f(1))f''(1) \left(\int_\Omega \frac{\p_{ij}^2 p_\xi(x) \p_kp_\xi(x)}{p(x)} \, dx+\right.\\
&\left.\dfrac{f'''(1)+f''(1)}{f''(1)} \int_\Omega \frac{\p_i p_\xi (x)\p_j p_\xi (x)\p_k p_\xi (x)}{p(x)^2} \, dx\right).
\end{align*}
Now, we use the fact that $\p_ip_\xi=\p_i\l_\xi(x)p_\xi$ and $\p_{ij}^2 l_\xi=-p^{-2}_\xi\p_i p_\xi\p_j p_\xi+p^{-1}_\xi\p^2_{ij}p$ to finally get:
$$c\cdot\E_\xi\left[ \left(\p_i\p_j l_\xi+\left(\dfrac{f'''(1)}{f''(1)}+2\right)\p_i l_\xi \p_j l_\xi\right)(\p_k l_\xi)\right]$$
and solving $\dfrac{f'''(1)}{f''(1)}+2=\dfrac{1-\alpha}{2}$ the result follows. A similar calculation can be repeated for the other set of symbols. The last statement follows from the definitions, i.e.,
$$\Gamma^k_{ij}=g^{kl}c\Gamma_{ij,l}^{-\alpha}=c^{-1}g^{kl}_Fc\Gamma_{ij,l}^{-\alpha}=\Gamma^{k(-\alpha)}_{ij} \ . $$ 
\end{proof}

\section*{Acknowledgements}
M.A.R.  has been partly supported by the research project FIS2015-63966, MINECO, Spain. P.T. has been partly supported by the research project FIS2015-63966, MINECO, Spain,
by the ICMAT Severo Ochoa grant SEV-2015-0554 and by the GNFM, Italy.


\begin{thebibliography}{99}
\begin{footnotesize}


\bibitem{AmaMe} S. I. Amari, H. Nagaoka. {Methods of information geometry}, Vol. 191, American Mathematical Soc. (2000)

\bibitem{AmaIG} S. I. Amari. Information geometry and its applications, Applied Mathematical Sciences. Japan: Springer  (2016).

\bibitem{AmaQ} S. I. Amari, A. Ohara. Geometry of q-exponential family of probability distributions, \emph{Entropy} \textbf{13.6}, 1170-1185 (2011).

\bibitem{AmaGen} S. I. Amari, A. Ohara, H.  Matsuzoe. Geometry of deformed exponential families: Invariant, dually-flat and conformal geometries, \emph{Physica A: Statistical Mechanics and its Applications}, \textbf{391}(18), 4308--4319 (2012).

\bibitem{Breg} L. Bregman.  The relaxation method of finding a common point of convex sets and its applications to the solution of problems in convex programming, \emph{USSR Computational Mathematics and
Mathematical Physics}, \textbf{7}, 200--217 (1967).

\bibitem{EncTemp} A. Enciso, P. Tempesta.  Uniqueness and characterization theorems for generalized entropies, \textit{Journal of Statistical Mechanics} 123101 (2017).

\bibitem{FraPlas} T. D. Frank, A. R.  Plastino. Generalized thermostatistics based on the Sharma-Mittal entropy and escort mean values, \emph{The European Physical Journal B-Condensed Matter and Complex Systems}, \textbf{30}(4), 543--549 (2002).

\bibitem{FraDaff} T. D.  Frank, A.  Daffertshofer. Exact time-dependent solutions of the Renyi Fokker-Planck equation and the Fokker-Planck equations related to the entropies proposed by Sharma and Mittal, \emph{Physica A: Statistical Mechanics and its Applications}, \textbf{285}(3), 351-366 (2000).

\bibitem{GBP} I. S. Gomez, E. P. Borges and M. Portesi, Fisher metric from relative entropy group, arxiv: 1805.11157 (2018).

\bibitem{Jay1} E. T. Jaynes. Information Theory and Statistical Mechanics, \textit{Physical Review. Series II}, \textbf{106} (4): 620--630 (1957).

\bibitem{Jay2} E. T. Jaynes. Information Theory and Statistical Mechanics II, \textit{Physical Review. Series II}, \textbf{108} (2): 171--190 (1957).

\bibitem{JizAri} P. Jizba, T. Arimitsu. The world according to R\'enyi: thermodynamics of multifractal systems, \emph{Annals of Physics}, \textbf{312}(1), 17-59 (2004).

\bibitem{Khinchin} A. I. Khinchin. \textit{Mathematical Foundations of Information Theory}, Dover, New York (1957).

\bibitem{Masi} M. Masi. A step beyond Tsallis and R\'enyi entropies, \emph{Physics Letters A}, \textbf{338}(3), 217-224 (2005).

\bibitem{Naudts} J. Naudts. Estimators, escort probabilities, and phi-exponential families in statistical physics, \emph{Journal of Inequalities in Pure and Applied Mathematics,} \textbf{5}(4), 102  (2004).

\bibitem{Nielsen} F. Nielsen, R. Nock. A closed-form expression for the Sharma-Mittal entropy of exponential families, \emph{Journal of Physics A: Mathematical and Theoretical,} \textbf{45}(3), 032003 (2011).

\bibitem{Ren} A. R\'enyi. On measures of entropy and information, in \emph{Proceedings of the fourth Berkeley symposium on mathematical statistics and probability}, Vol. 1, pp. 547-561 (1961).

\bibitem{Salic} M. Salicru, M. L. Menendez, D. Morales, L. Pardo. Asymptotic distribution of (h,$\varphi$)-entropies, \emph{Communications in Statistics-Theory and Methods,} \textbf{22}(7), 2015-2031 (1993).

\bibitem{Shannon} C. E. Shannon.   A mathematical theory of communication, \textit{Bell Syst. Tech. J.} \textbf{27} (1948) 379--423, \textbf{27}  623--653 (1948).


\bibitem{Shannon2}  C. E. Shannon and W. Weaver.  \textit{The mathematical Theory of Communication}, University of Illinois Press, Urbana, USA (1949).

\bibitem{Temp} P. Tempesta. Group entropies, correlation laws, and zeta functions, \emph{Physical Review E} \textbf{84}(2): 021121 (2011).


\bibitem{TempestaAP2016} P. Tempesta. Beyond the Shannon-Khinchin Formulation: The Composability Axiom and the Universal-Group Entropy. \textit{Annals of Physics} \textbf{365}, 180--197 (2016).


\bibitem{TempestaPRA2016} P. Tempesta, Formal Groups and $Z$--Entropies, \textit{Proceeding of the Royal Society A}, Vol. 472, 20160143 (2016).

\bibitem{Tbook} C. Tsallis. \emph{Introduction to nonextensive statistical mechanics--Approaching a Complex World}, New York: Springer (2009).


\bibitem{Tsallis1} C. Tsallis, Possible generalization of the Boltzmann--Gibbs statistics, \textit{Journal of Statistical Physics} \textbf{52}, Nos. 1/2, 479--487 (1988).

\bibitem{Zoz} S. Zozor, G. M. Bosyk, M. Portesi. General entropy-like uncertainty relations in finite dimensions,  \emph{Journal of Physics A: Mathematical and Theoretical}, \textbf{47}(49), 495302 (2014).
\end{footnotesize}
\end{thebibliography}
\end{document}